\DeclareMathAlphabet{\mathbfi}{OML}{cmm}{b}{it}
\let\originalleft\left
\let\originalright\right
\renewcommand{\left}{\mathopen{}\mathclose\bgroup\originalleft}
\renewcommand{\right}{\aftergroup\egroup\originalright}
\newcommand{\biggg}{\bBigg@\thr@@}
\newcommand{\Biggg}{\bBigg@{3.5}}
\newenvironment{equations}[1][]{\subequations\ifx\relax#1\relax\else\label{#1}\fi\align\ignorespaces}{\endalign\ignorespacesafterend\endsubequations}
\def\@spliteq#1{\begin{equation}\begin{split}#1\end{split}\end{equation}}
\def\@spliteqstar#1{\begin{equation*}\begin{split}#1\end{split}\end{equation*}}
\def\splitequation{\collect@body\@spliteq}
\def\csname splitequation*\endcsname{\collect@body\@spliteqstar}
\def\csname endsplitequation*\endcsname{\ignorespacesafterend}
\renewcommand{\vec}[1]{{\ifnum9<1#1\mathbf{#1}\else\ifcat\noexpand#1\relax\boldsymbol{#1}\else\mathbfi{#1}\fi\fi}}
\newcommand{\mathe}{\mathrm{e}}
\newcommand{\mathi}{\mathrm{i}}
\let\oldre\Re
\let\oldim\Im
\renewcommand{\Re}{\oldre\mathfrak{e}\,}
\renewcommand{\Im}{\oldim\mathfrak{m}\,}
\newcommand{\total}{\mathop{}\!\mathrm{d}}
\newcommand{\laplace}{\mathop{}\!\bigtriangleup}
\newcommand{\abs}[1]{{\left\lvert{#1}\right\rvert}}
\newcommand{\norm}[1]{{\left\lVert{#1}\right\rVert}}
\newcommand{\sgn}{\operatorname{sgn}}
\newcommand{\1}{\mathbbm{1}}
\newcommand{\eqend}[1]{\,#1}
\newcommand{\bigo}[1]{\mathcal{O}\left({#1}\right)}
\newcommand{\bra}[1]{\left\langle{#1}\right\vert}
\newcommand{\ket}[1]{\left\vert{#1}\right\rangle}
\newcommand{\expect}[1]{\left\langle{#1}\right\rangle}
\newcommand{\bessel}[3]{\mathop{}\!\mathrm{#1}_{#2}\left(#3\right)}
\newtheorem{theorem}{\textsc{Theorem}}[section]
\newtheorem{proposition}[theorem]{\textsc{Proposition}}
\newtheorem{corollary}[theorem]{\textsc{}Corollary}
\newtheorem{definition}[theorem]{\textsc{Definition}}
\newtheorem{remark}[theorem]{Remark}
\newtheorem{assumption}[theorem]{Assumption}
\begin{document}

\title{Strict deformations of quantum field theory in de Sitter spacetime}
% \thanks{Footnote to title of article.}

\author{M. B. Fr{\"o}b}
\email{mfroeb@itp.uni-leipzig.de}
\author{A. Much}
\email{much@itp.uni-leipzig.de}
\affiliation{Institut f{\"u}r Theoretische Physik, Universit{\"a}t Leipzig, Br{\"u}derstra{\ss}e 16, 04103 Leipzig, Germany}

\date{05. February 2021}
\revised[Revised: ]{14. May 2021}

\begin{abstract}
We propose a new deformed Rieffel product for functions in de Sitter spacetime, and study the resulting deformation of quantum field theory in de Sitter using warped convolutions. This deformation is obtained by embedding de Sitter in a higher-dimensional Minkowski spacetime, deforming there using the action of translations and subsequently projecting back to de Sitter. We determine the two-point function of a deformed free scalar quantum field, which differs from the undeformed one, in contrast to the results in deformed Minkowski spacetime where they coincide. Nevertheless, we show that in the limit where de Sitter spacetime becomes flat, we recover the well-known non-commutative Minkowski spacetime.
\end{abstract}

% \keywords{Suggested keywords} 

\maketitle

\section{Introduction}

The core statement of the geometrical measurement problem given in Refs.~\onlinecite{Ahluwalia1993,DFR} is that spacetime, locally modelled as a manifold, should break down at very short distances of the order of the Planck length. This a consequence of the limitations in possible accuracies of localization of spacetime points, due to the interplay of the quantum-mechanical uncertainty principle with the formation of black holes in general relativity. Hence, a true theory of quantum gravity has to take the loss of classical spacetime into account. A way out of this dilemma and a path to put these arguments into a mathematical framework, is the proposition of quantizing spacetime itself. Hence, a typical starting point is to represent spacetime by (representations of) an algebra $\mathcal{V}$ with generators $\hat{x}$ obeying the following commutation relations:
\begin{equation}
\label{eq:commutator_mink}
\left[ \hat{x}^\mu, \hat{x}^\nu \right] = 2 \mathi \Theta^{\mu\nu} \eqend{,}
\end{equation}
where $\Theta$ is a constant skew-symmetric matrix and $\mu,\nu = 0,1,\ldots,n$. In this way, the commutative smooth manifold structure becomes non-commutative, and one can derive an uncertainty principle for the measurement of coordinates.

In a next step, one has to define a quantum field theory (QFT) in the non-commutative spacetime described by $\mathcal{V}$, which was also done in a mathematically rigorous fashion in Ref.~\onlinecite{DFR}. The representation space of a QFT in the given non-commutative spacetime is the tensor product of $\mathcal{V} \otimes \mathscr{H}$, where $\mathscr{H}$ is the Hilbert space of the quantum field $\phi$ under consideration. It was soon recognized in Ref.~\onlinecite{GL1} (see also Ref.~\onlinecite{GL2}) that there exists a unitary operator mapping the tensor product space $\mathcal{V} \otimes \mathscr{H}$ to the Hilbert space $\mathscr{H}$. Not only made this mapping computations with the quantum fields easier, but it could be used to prove a weakened locality property of the deformed quantum fields $\phi_\Theta$ (the quantum fields in the non-commutative spacetime), known as wedge-locality, and enabled the non-perturbative calculation of the scattering of two deformed quantum fields.

The next advance in the area of strict deformations of QFT was the realization of warped convolutions~\cite{BS,BLS}. These are a well-defined mathematical prescription on how to obtain the deformed quantum field $\phi_\Theta$ from the undeformed one $\phi$, namely as
\begin{equation}
\phi_\Theta = \int \tau_{\Theta x}(\phi) \total E(x) \eqend{,}
\end{equation}
where $\total E$ is the spectral measure with respect to the translations on $\mathbb{R}^n$, and $\tau$ denotes the action of $\mathbb{R}^n$ on functions by translation: $\tau_a \phi(x) = \phi(x+a)$. It was then shown~\cite{BLS} that this prescription is connected to the Rieffel product~\cite{R}, a noncommutative associative product of functions given by
\begin{equation}
\label{eq:rfp}
f \times_\Theta g = (2\pi)^{-n} \iint \tau_{\Theta x}(f) \tau_y(g) \mathe^{\mathi x \cdot y} \total^n x \total^n y \eqend{,}
\end{equation}
where the oscillatory integral is well-defined for functions $f, g \in \mathcal{D}$ as defined in App.~\ref{oi}. In particular, warped convolutions supply an isometric representation of Rieffel's strict product,
\begin{align}
A_\Theta \, B_\Theta = \left( A \times_\Theta B \right)_\Theta \eqend{,}
\end{align}
for operators $A,B$ that belong to the subclass $\mathcal{C}^{\infty}$ of a $C^*$-algebra which is smooth w.r.t.\ the adjoint action generated by the translation group.

The scope of warped convolutions, however powerful, is unfortunately limited to smooth actions of $\mathbb{R}^n$, i.\,e., mutually commutative generators of the deformation. Hence, to define the deformations one needs a set of commuting operators, at least two. For a curved manifold, commuting Lie vector fields can supply such an action~\cite{Mor}; another possibility would be a spacetime with commuting Killing vector fields~\cite{DFM1}. Yet, as is well known, such vector fields do not exist for arbitrary curved spacetimes, which prevents the application of deformations in general curved spacetimes.

A way out is to embed the curved spacetime manifold (or a patch thereof) into a higher dimensional flat (Minkowski) spacetime. To illustrate the concept, we consider $n$-dimensional de Sitter spacetime, which is embedded into a $(n+1)$-dimensional Minkowski spacetime $\mathbb{R}^{1,n}$ in the standard straightforward way. One obtains in this way an $n$-dimensional group of translations that allow a straightforward definition of the Rieffel product and warped convolutions. Apart from this well-known and comparatively simple embedding, we choose de Sitter spacetime for its importance in early-universe cosmology. In particular, de Sitter spacetime is a good description of the inflationary period in cosmology shortly after the Big Bang (see for example Ref.~\onlinecite[Ch.~11]{cos06} or Ref.~\onlinecite{MFBcosmo}), where the universe expands exponentially and one expects that the quantum nature of spacetime becomes important. The expansion rate is given by the Hubble parameter $H$, and in the limit $H \to 0$ the de Sitter spacetime becomes $n$-dimensional Minkowski spacetime $\mathbb{R}^{1,n-1}$. We show that in this limit, our definition also recovers the non-commutative Minkowski spacetime~\eqref{eq:commutator_mink}, i.e., the deformation and the limit $H \to 0$ commute.

The remainder of this article is structured as follows: in section~\ref{sec:deform_ds}, we review the embedding formalism for de Sitter, define the deformed (Rieffel) product and determine the resulting non-commutative spacetime. In section~\ref{sec:deform_qft}, we determine the corresponding deformed quantum field theory of a free scalar field, and compute the two-point function of the deformed quantum field $\phi_\Theta$. In contrast to Minkowski spacetime, where the deformed two-point function coincides with the undeformed one, we show that in de Sitter spacetime the two differ. We conclude in section~\ref{sec:conclusion}, and collect various proofs in the appendices.

\begin{assumption}
In the following we assume that all functions that we consider are elements of the domain $\mathcal{D}$ defined in App.~\ref{oi}.
\end{assumption}

\section{Deforming de Sitter space}
\label{sec:deform_ds}

To define the Rieffel product for functions on a manifold $M$, one needs an action of $\mathbb{R}^n$. There are various ways to achieve such an action, for example using commuting Killing vector fields, which however do not exist for a general curved spacetime. Therefore, we propose to embed the spacetime under consideration into a higher-dimensional Minkowski spacetime, and deform the action of the (commutative) translations along the embedding coordinates, hence an action of $\mathbb{R}^n$ with which one can define a Rieffel product. In our case, $n$-dimensional de Sitter spacetime can be embedded in $(n+1)$-dimensional Minkowski spacetime $\mathbb{R}^{1,n}$, which is the embedding with the least dimension possible, i.e., a minimal one.

Given thus $\mathbb{R}^{1,n}$ with Cartesian coordinates $X^A$, $A=0,1,\ldots,n$ and the flat metric $\eta_{AB}$, de Sitter spacetime is defined as the hyperboloid
\begin{equation}
\label{eq:desitter_embedding}
\eta_{AB} X^A X^B = \frac{1}{H^2} \eqend{,}
\end{equation}
where $H$ is the Hubble parameter, a positive constant, and we use the Einstein summation convention throughout the paper. The part of de~Sitter spacetime that is relevant in cosmology is known as the (flat, expanding) Poincar{\'e} patch~\cite{MFBcosmo}, and has cosmological time $t$ and spatial Cartesian coordinates $\vec{x}^a$, $a=1,\ldots,n-1$ as coordinates. Those parametrize the hyperboloid~\eqref{eq:desitter_embedding} according to
\begin{equations}[eq:popa]
X^0 &= \frac{1}{H} \sinh( H t ) + \frac{H}{2} \exp( H t ) \vec{x}^2 \eqend{,} \\
X^a &= \exp( H t ) \vec{x}^a \eqend{,} \\
X^n &= \frac{1}{H} \cosh( H t ) - \frac{H}{2} \exp( H t ) \vec{x}^2 \eqend{,}
\end{equations}
where $\vec{x}^2 = \delta_{ab} \vec{x}^a \vec{x}^b$.\footnote{This parametrization can be characterized by $X^0 + X^n > 0$, which covers half of the hyperboloid. The other half, known as the contracting Poincar{\'e} patch, is obtained by inverting the sign of $X^n$.} The induced metric reads
\begin{equation}
\label{eq:stmet1}
g = - \total t^2 + \exp\left( 2 H t \right) \total \vec{x}^2 \eqend{,}
\end{equation}
and as $H \to 0$, we have $X^0 \to t$, $X^a \to \vec{x}^a$ and $g \to \eta$, such that the flat Minkowski space is recovered. In the following, we will consider exclusively the Poincar{\'e} patch, but continue to call it de~Sitter for brevity, in accordance with standard practice in cosmology.

By inverting the parametrization~\eqref{eq:popa}, which results in
\begin{equation}
\label{eq:popb}
t = \frac{1}{H} \ln \left[ H (X^0 + X^n) \right] \eqend{,} \qquad \vec{x}^a = \frac{X^a}{H (X^0 + X^n)} \eqend{,}
\end{equation}
functions $f(x)$ defined on de~Sitter spacetime can be extended to the full embedding spacetime $\mathbb{R}^{1,n}$. We will make extensive use of this extension in the following modified form:
\begin{definition}
\label{def:extension}
Given a function $f$ defined on de~Sitter spacetime, we denote by $f(X)$ its extension to the embedding spacetime using the inverse parametrization~\eqref{eq:popb}, with $X^n$ replaced by $\sqrt{ H^{-2} + (X^0)^2 - \delta_{ab} X^a X^b}$. That is, $f(X)$ is the (uniquely determined) function on $\mathbb{R}^{1,n}$ which is independent of $X^n$ and agrees with the original function $f$ when restricted to the Poincar{\'e} patch~\eqref{eq:popa}.
\end{definition}
To shorten the formulas we will nevertheless continue to write $X^n$, with the implicit understanding that it is just a shorthand for $\sqrt{ H^{-2} + (X^0)^2 - \delta_{ab} X^a X^b }$. We note that this extension is of course not unique, but very convenient for our purposes.

\subsection{The deformed product}
\label{sec_deformed}

In the following we define the deformed product using the embedding formalism. The definition is analogous to the case where a smooth action of the group $\mathbb{R}^n$ exists, i.e., {\`a} la Rieffel~\cite{R}. In our case, the smooth action $\tau$ of $\mathbb{R}^n$ acts as a translation in the embedding coordinates on functions extended to the embedding space.

\begin{definition}
\label{def:defpro}
Let the smooth action $\tau$ of the group $\mathbb{R}^n$ denote translations w.r.t.\ the embedding coordinates $X^\mu = (X^0,X^a)$. Moreover, let $\Theta$ be a constant $n \times n$ skew-symmetric matrix. The non-commutative product between two functions $f,g \in \mathcal{D}$ is defined in analogy to Refs.~\onlinecite{R,BLS} by
\begin{splitequation}
\label{eq:defpro}
\left( f \times_\Theta g \right)(z) &\equiv (2\pi)^{-n} \lim_{\epsilon \to 0^+} \iint \chi(\epsilon X, \epsilon Y) \, \tau_{\Theta X}(f(Z)) \, \tau_{Y}(g(Z)) \, \mathe^{\mathi X \cdot Y} \total^n X \total^n Y \\
&= (2\pi)^{-n} \lim_{\epsilon \to 0^+} \iint \chi(\epsilon X, \epsilon Y) \, f(Z+\Theta X) \, g(Z+Y) \, \mathe^{\mathi X \cdot Y} \total^n X \total^n Y \eqend{,}
\end{splitequation}
where $Z$ is the embedding point corresponding to $z$ according to Def.~\ref{def:extension}, the scalar product $X \cdot Y$ is w.r.t.\ the flat Minkowski metric $\eta_{\mu\nu}$, where $\mu,\nu=0,\ldots,n-1$, i.e., $X \cdot Y = - X^0 Y^0 + \delta_{ab} X^a Y^b$, and the integrations are w.r.t.\ $X^0$ and $X^a$ (resp. $Y^0$ and $Y^a$) only. Moreover, the cut-off function $\chi \in \mathscr{S}(\mathbb{R}^{2n})$ is arbitrary except for the condition $\chi(0,0) = 1$.
\end{definition}
Note that the action $\tau$ on functions $f$ extended from de Sitter spacetime to the embedding space is well-defined, since the extension is constant along the $X^n$ direction. For the same reason, the deformation matrix $\Theta$ only has size $n \times n$ and does not act on the last embedding coordinate $X^n$ at all. While in embedding coordinates, the action $\tau$ is linear, in intrinsic coordinates it is a highly non-linear operation. By its very definition, the deformed product~\eqref{eq:defpro} strongly depends on the exact choice of the extension of functions to the embedding space, but we will see later that our choice has phenomenological advantages.

Apart from the use of translations w.r.t.\ the embedding coordinates instead of intrinsic coordinates, this deformed product is completely analogous to the Rieffel product. Hence, we have the following result:
\begin{proposition}
\label{prop:associative}
The deformed product given in Def.~\ref{def:defpro} is associative, i.e.,
\begin{equation}
\label{eq:associative}
\left( \left( f \times_\Theta g \right) \times_\Theta h \right)(z) = \left( f \times_\Theta	\left( g \times_\Theta h \right) \right)(z)
\end{equation}
for all functions $f, g, h \in \mathcal{D}$ and fulfills
\begin{enumerate}
\item[(i)] $\displaystyle\lim_{\Theta \rightarrow 0} f \times_\Theta g = f g$ \eqend{,}
\item[(ii)] $f \times_\Theta 1 = 1 \times_\Theta f = f$, where $1$ is the constant function equal to $1$.
\end{enumerate}
\end{proposition}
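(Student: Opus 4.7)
The plan is to reduce all three assertions to the corresponding statements about the classical Rieffel product on $\mathbb{R}^n$. The key observation is that by Def.~\ref{def:extension} the extensions $f(X), g(X), h(X)$ are independent of $X^n$, the translations $\tau_{\Theta X}$ and $\tau_Y$ act only on the first $n$ embedding coordinates $(X^0, X^a)$, and the integrand in \eqref{eq:defpro} depends on the evaluation point $Z$ only through those same components. Consequently, the right-hand side of \eqref{eq:defpro} is, for each fixed value of $Z^n$, literally Rieffel's oscillatory integral for the $\mathbb{R}^n$-action by translation applied to the restrictions of $f(X)$ and $g(X)$; and the resulting value is $Z^n$-independent. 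Thus the extension of $f \times_\Theta g$ prescribed by Def.~\ref{def:extension} coincides with the $X^n$-independent function defined by the same integral formula on $\mathbb{R}^{1,n}$, which makes the iterated products in \eqref{eq:associative} well-defined without any ambiguity and identifies $\times_\Theta$ with the flat Rieffel product in the first $n$ embedding variables.

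With this identification in hand, associativity follows directly from Rieffel's associativity argument in Ref.~\onlinecite{R}: both sides of \eqref{eq:associative} can be written as four-fold oscillatory integrals against the phase $\mathe^{\mathi (X \cdot Y + X' \cdot Y')}$, and the change of variables that absorbs the cross-terms produced by two successive applications of $\tau_{\Theta \cdot}$ recasts left- and right-bracketing into the same symmetric expression in $(f,g,h)$. The oscillatory integral estimates used there are precisely those collected in App.~\ref{oi} for functions in $\mathcal{D}$, and the independence of the final answer from the cut-off $\chi$ (subject to $\chi(0,0)=1$) is a standard feature of that formalism. Property (i) follows from taking $\Theta \to 0$ pointwise inside \eqref{eq:defpro}: the integrand becomes $\chi(\epsilon X, \epsilon Y)\, f(Z)\, g(Z+Y)\, \mathe^{\mathi X \cdot Y}$, the $X$-integration produces a nascent $\delta(Y)$, and the limit $\epsilon \to 0^+$ collapses the $Y$-integration to $f(Z) g(Z) = (fg)(z)$. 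Property (ii) is immediate by setting either factor equal to $1$: since $\tau_{\Theta X}(1) = \tau_Y(1) = 1$, the remaining oscillatory integral evaluates to $f(Z)$ by the same $\delta$-function argument on the opposite variable.

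The main technical obstacle is the rigorous handling of the oscillatory integrals, in particular justifying the interchange of limits, the substitutions used in the associativity argument, and the independence of the final answer from the cut-off function $\chi$. All of this is covered by the standing assumption that $f,g,h \in \mathcal{D}$, whose construction in App.~\ref{oi} is tailored so that the relevant integrals converge in the usual symbol-class sense. The only point that is genuinely new relative to the flat setting is the preliminary check that the extension prescribed by Def.~\ref{def:extension} is compatible with iteration, i.e.\ that products of $X^n$-independent functions remain $X^n$-independent; but this is immediate from inspection of the integrand. Once the extension and the translation action are shown to commute in this sense, the proof is a direct transcription of Rieffel's, with the $\mathbb{R}^n$-action realised as translations of the embedding coordinates rather than of intrinsic ones.
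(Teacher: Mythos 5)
Your proposal is correct and follows essentially the same route as the paper, which simply observes that $\times_\Theta$ is the flat Rieffel product for the translation action on the $X^n$-independent extensions and then cites Ref.~\onlinecite{R} (Theorem 2.14 for associativity, Corollary 2.8 for (i), Corollary 2.13 for (ii)). Your preliminary check that the extension of Def.~\ref{def:extension} is compatible with iteration is a worthwhile detail that the paper only addresses implicitly in the remark following Def.~\ref{def:defpro}.
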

\begin{proof}
The proof is identical to existing proofs of the Rieffel product using smooth actions of $\mathbb{R}^n$, see Ref.~\onlinecite[Theorem 2.14]{R} for the associativity condition~\eqref{eq:associative}, Corollary 2.8 for the limit of vanishing $\Theta$~\ref{prop:associative}~(i) and Corollary 2.13 for the unit element condition~\ref{prop:associative}~(ii).
\end{proof}
Moreover, in the limit $H \to 0$ it becomes the flat-space Rieffel product:
\begin{proposition}
\label{prop:limit}
If the limit $H \to 0$ exists for functions $f$ and $g$, their deformed product turns into the Rieffel product in that limit.
\end{proposition}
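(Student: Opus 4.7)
The plan is to pass the limit $H \to 0$ through the oscillatory integral in Def.~\ref{def:defpro} and identify the result with the flat-space Rieffel product~\eqref{eq:rfp}.

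First I would analyze the pointwise $H \to 0$ limit of the extension defined in Def.~\ref{def:extension}. Expanding $X^n = \sqrt{H^{-2} + (X^0)^2 - \delta_{ab} X^a X^b}$ for small $H$ yields $H(X^0 + X^n) = 1 + H X^0 + \mathcal{O}(H^2)$, so the inverse parametrization~\eqref{eq:popb} gives $t = X^0 + \mathcal{O}(H)$ and $\vec{x}^a = X^a + \mathcal{O}(H)$. Hence, whenever the $H \to 0$ limit $f^{\mathrm{flat}}$ of $f$ exists, the extended function satisfies $f(X) \to f^{\mathrm{flat}}(X^0, X^a)$ pointwise in the embedding coordinates $(X^0, X^a)$, and similarly for $g$.

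Applying this to the integrand of~\eqref{eq:defpro}, I would next check that $f(Z + \Theta X) \to f^{\mathrm{flat}}(z + \Theta X)$ and $g(Z + Y) \to g^{\mathrm{flat}}(z + Y)$ as $H \to 0$. Because $\Theta$ acts only on the first $n$ components and the extension is independent of $X^n$, the shifted arguments are well-defined in the embedding space; and because the parametrization~\eqref{eq:popa} of the de Sitter base point also reduces to the identity in the limit, $Z^\mu \to z^\mu$ for $\mu = 0, \ldots, n-1$. Combined with the fact that the factor $\chi(\epsilon X, \epsilon Y)\, \mathe^{\mathi X \cdot Y}$ is $H$-independent and the Minkowski scalar product is already the one appearing in~\eqref{eq:rfp}, the integrand of~\eqref{eq:defpro} converges pointwise to that of~\eqref{eq:rfp} with $(f, g)$ replaced by $(f^{\mathrm{flat}}, g^{\mathrm{flat}})$.

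The main obstacle is justifying the exchange of the $H \to 0$ limit with the $\epsilon \to 0^+$ oscillatory-integral limit. My approach would be to establish bounds on the extensions and their derivatives that are uniform in $H$ on a small interval around $H = 0$: such bounds are available because the inverse parametrization~\eqref{eq:popb}, re-expressed using $X^n = \sqrt{H^{-2} + (X^0)^2 - \delta_{ab} X^a X^b}$, depends smoothly on $H$ on the Poincar{\'e} region $X^0 + X^n > 0$, and membership of $f, g$ in $\mathcal{D}$ encodes the decay hypotheses spelled out in App.~\ref{oi}. Given such $H$-uniform bounds, the repeated integration-by-parts estimates underlying the oscillatory integral remain uniform in $H$, so a dominated-convergence argument justifies commuting the two limits and identifies the result with $(f^{\mathrm{flat}} \times_\Theta g^{\mathrm{flat}})(z)$.
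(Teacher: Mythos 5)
Your proposal is correct and follows essentially the same route as the paper: the extension is independent of $X^n$ and reduces pointwise to the identity as $H \to 0$, so the integrand of~\eqref{eq:defpro} converges pointwise to that of the flat Rieffel product~\eqref{eq:rfp}, after which one interchanges limits. The only minor difference is that the paper performs the interchange of $H \to 0$ with the integration at fixed $\epsilon$ (where the Schwartz cutoff $\chi$, chosen independent of $H$, makes dominated convergence immediate) and only then sends $\epsilon \to 0^+$, whereas you propose $H$-uniform oscillatory-integral bounds to commute $H \to 0$ with the full $\epsilon \to 0^+$ limit --- a somewhat heavier but equally valid justification.
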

\begin{proof}
Choosing a cutoff function $\chi$ that is independent of $H$, the proposition follows from the fact that as $H \to 0$ we have $X^0 \to t$ and $X^a \to \vec{x}^a$~\eqref{eq:popa} and that nothing in the definition~\eqref{eq:defpro} of the deformed product depends on $X^n$, such that the integrand has a smooth pointwise limit. For fixed $\epsilon$ we can interchange the limit $H \to 0$ and the integration in~\eqref{eq:defpro} because of the cutoff function, and the Rieffel product is the limit $\epsilon \to 0$ of the resulting integral.
\end{proof}

\subsection{The resulting non-commutative spacetime}

In the case of deformations on $\mathbb{R}^n$ or the Lorentzian analogue thereof $\mathbb{R}^{1,n-1}$, it is well-known that the Rieffel product leads to non-trivial but simple commutation relations between the coordinate functions. In particular, it leads to a constant non-commutative spacetime with commutation relations
\begin{equation}
\left[ x^\mu, x^\nu \right]_\Theta = 2 \mathi \Theta^{\mu\nu} \eqend{,}
\end{equation}
where we defined
\begin{definition}
The deformed commutator $[\cdot,\cdot]_\Theta$ between two functions $f,g$ is given by the commutator between those two functions with the deformed instead of the pointwise product:
\begin{equation}
\left[ f, g \right]_\Theta \equiv f \times_\Theta g - g \times_\Theta f \eqend{.}
\end{equation}
\end{definition}
For more details about the commutation relations on Minkowski space see Ref.~\onlinecite[Lemma~5.2]{MUc}. Since we have defined the deformed product completely analogous but using the embedding coordinates, we have the same result for the $n$ embedding coordinates $Z^\mu$ and $Z^\nu$:
\begin{equation}
\label{eq:mwst}
\left[ Z^\mu, Z^\nu \right]_\Theta = 2 \mathi \Theta^{\mu\nu} \eqend{.}
\end{equation}
Using that
\begin{equation}
\label{eq:integral_delta}
(2\pi)^{-n} \lim_{\epsilon \to 0^+} \iint \chi(\epsilon X, \epsilon Y) \, f(X) \, \mathe^{ \mathi X \cdot Y } \total^n X \total^n Y = f(0) \eqend{,}
\end{equation}
equation~\eqref{eq:mwst} follows from a straightforward computation:
\begin{splitequation}
\left( Z^\mu \times_\Theta Z^\nu \right)(z) &= (2\pi)^{-n} \lim_{\epsilon \to 0^+} \iint \chi(\epsilon X, \epsilon Y) \, (Z+\Theta X)^\mu \, (Z+Y)^\nu \, \mathe^{ \mathi X \cdot Y } \total^n X \total^n Y \\
&= Z^\mu Z^\nu - \mathi (2\pi)^{-n} \lim_{\epsilon \to 0^+} \iint \chi(\epsilon X, \epsilon Y) \, (\Theta X)^\mu \, \partial_{X^\nu} \, \mathe^{ \mathi X \cdot Y } \total^n X \total^n Y \\
&= Z^\mu Z^\nu + \mathi \Theta^{\mu\nu} \eqend{,}
\end{splitequation}
where integration by parts was permitted because $\chi \in \mathscr{S}(\mathbb{R}^n)$, and the term with the derivative acting on $\chi$ vanishes in the limit $\epsilon \to 0$. Due to the skew symmetry of $\Theta$, one then obtains the commutator relation~\eqref{eq:mwst}.

However, since the embedding spacetime is only an auxiliary construct, we are more interested in the commutation relations that this new deformed product induces from the perspective of the intrinsic de Sitter coordinates. We obtain the following result:
\begin{theorem}
\label{thm:crds}
The deformed commutator relations of the de Sitter spacetime are (to first order in $\Theta$) given by
\begin{equations}[eq:tx_xx_comm]
\left[ t, \vec{x}^b \right]_\Theta &= 2 \mathi \mathe^{- 2 H t} \left[ \Theta^{0b} - \Theta^{\rho b} F_\rho \right] + \bigo{\Theta^2} \eqend{,} \\
\left[ \vec{x}^a, \vec{x}^b \right]_\Theta &= 2 \mathi \mathe^{- 2 H t} \left[ \Theta^{ab} -  H\,\vec{x}^a \left( \Theta^{0b} - \Theta^{\rho b} F_\rho \right) +  H\,\vec{x}^b \left( \Theta^{0a} - \Theta^{\rho a} F_\rho \right) \right] + \bigo{\Theta^2} \eqend{,}
\end{equations}
where $F$ is the $n$-dimensional vector
\begin{equation}
F^\mu \equiv \frac{X^\mu}{X^n} = \left( \cosh( H t ) - \frac{H^2}{2} \exp( H t ) \vec{x}^2 \right)^{-1} \begin{pmatrix} \sinh( H t ) + \frac{H^2}{2} \exp( H t ) \vec{x}^2 \\ H \exp( H t ) \vec{x}^a \end{pmatrix} \eqend{.}
\end{equation}
\end{theorem}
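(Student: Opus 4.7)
The plan is to linearize the deformed product to first order in $\Theta$, evaluate the chain-rule Jacobian of the intrinsic coordinates with respect to the embedding coordinates, and then simplify. Inserting the Taylor expansion $f(Z+\Theta X) = f(Z) + (\Theta X)^\mu \partial_\mu f(Z) + \bigo{\Theta^2}$ into Def.~\ref{def:defpro} and repeating the integration-by-parts trick that led to~\eqref{eq:mwst} (writing $X^\nu = -\mathi\,\partial_{Y^\nu}$ on the kernel $\mathe^{\mathi X \cdot Y}$ and integrating by parts in $Y$) yields $(f \times_\Theta g)(z) = f(Z)\,g(Z) + \mathi\,\Theta^{\mu\nu}(\partial_\mu f)(Z)(\partial_\nu g)(Z) + \bigo{\Theta^2}$, and hence $[f,g]_\Theta = 2\mathi\,\Theta^{\mu\nu}(\partial_\mu f)(\partial_\nu g) + \bigo{\Theta^2}$, with $\partial_\mu \equiv \partial/\partial X^\mu$ for $\mu = 0,\ldots,n-1$.

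For the Jacobian, I view $t,\vec{x}^a$ via Def.~\ref{def:extension} as $X^n$-independent functions of $(X^0,X^a)$ with $X^n = \sqrt{H^{-2}+(X^0)^2 - \delta_{ab}X^aX^b}$. Using the identities $HS \equiv H(X^0+X^n) = \mathe^{Ht}$ and $X^a = \mathe^{Ht}\vec{x}^a$ that follow from~\eqref{eq:popa}, direct differentiation gives $\partial_{X^0}X^n = F^0$ and $\partial_{X^a}X^n = -F^a$, and therefore $\partial_{X^0}t = \mathe^{-Ht}(1+F^0)$, $\partial_{X^a}t = -\mathe^{-Ht}F^a$, $\partial_{X^0}\vec{x}^b = -H\mathe^{-Ht}\vec{x}^b(1+F^0)$, and $\partial_{X^c}\vec{x}^b = \mathe^{-Ht}(\delta^{bc}+H\vec{x}^b F^c)$.

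Substituting into the first-order commutator formula and summing over the four index blocks $(\mu,\nu)\in\{0,a\}\times\{0,b\}$ then produces the theorem. In $[t,\vec{x}^b]_\Theta$, the $H\vec{x}^b$-proportional terms cancel between the $\Theta^{0c}$- and $\Theta^{c0}$-contributions after invoking $\Theta^{c0} = -\Theta^{0c}$, the purely spatial piece $\Theta^{cd}F^c F^d$ vanishes by the antisymmetric-against-symmetric pairing, and the remainder $2\mathi\mathe^{-2Ht}[\Theta^{0b}(1+F^0) - \Theta^{cb}F^c]$ is rewritten using $F_0 = -F^0$ and $F_c = F^c$ as $2\mathi\mathe^{-2Ht}[\Theta^{0b} - \Theta^{\rho b}F_\rho]$. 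The calculation of $[\vec{x}^a,\vec{x}^b]_\Theta$ is structurally identical but one step longer: the piece $\propto H^2\vec{x}^a\vec{x}^b F^c F^d\Theta^{cd}$ again drops out, antisymmetrization of the $\Theta^{0c}$-$\Theta^{c0}$ block produces two copies of the $[t,\vec{x}^\bullet]_\Theta$ structure weighted by $H\vec{x}^a$ and $H\vec{x}^b$ with opposite signs, and together with the bare $\mathe^{-2Ht}\Theta^{ab}$ they assemble into the claimed second identity.

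The main obstacle is combinatorial rather than conceptual: roughly a dozen terms must be reorganized using the antisymmetry of $\Theta$ against the symmetric $F^\mu F^\nu$ and the antisymmetry in $a\leftrightarrow b$ in order to collapse them into the compact form involving $F_\rho$. No new ideas beyond those already used to derive~\eqref{eq:mwst} are required.
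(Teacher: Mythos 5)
Your proposal is correct and follows essentially the same route as the paper: Taylor-expand the deformed product, integrate by parts to obtain the first-order Moyal-type formula $[f,g]_\Theta = 2\mathi\,\Theta^{\mu\nu}(\partial_{X^\mu}f)(\partial_{X^\nu}g)+\bigo{\Theta^2}$, and then evaluate the Jacobian of $(t,\vec{x}^a)$ with respect to $(X^0,X^a)$ using $H(X^0+X^n)=\mathe^{Ht}$ and $\partial_{X^\mu}X^n=-X_\mu/X^n$. Your intermediate expressions for the Jacobian and the cancellations (the $\Theta^{cd}F^cF^d$ terms by antisymmetry, the $H\vec{x}^b$ cross terms, and the rewriting $(1+F^0)\Theta^{0b}-\Theta^{cb}F^c=\Theta^{0b}-\Theta^{\rho b}F_\rho$ via $F_0=-F^0$) all check out against the paper's computation.
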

\begin{proof}
We first derive a formally equivalent representation of the deformed Rieffel product~\ref{def:defpro} in terms of differential operators, which is useful for an expansion in powers of the non-commutativity matrix $\Theta$. For this, we expand $f(Z + \Theta X)$ in equation~\eqref{eq:defpro} in a Taylor series in $\Theta$, which gives
\begin{splitequation}
&\left( f \times_\Theta g \right)(z) = \sum_{k=0}^\infty \frac{1}{k!} (2\pi)^{-n} \lim_{\epsilon \to 0^+} \iint \chi(\epsilon X, \epsilon Y) \, (\Theta X)^{\otimes k} \left( \nabla^k f \right)(Z) \, g(Z+Y) \, \mathe^{\mathi X \cdot Y} \total^n X \total^n Y \\
&\qquad= \sum_{k=0}^\infty \frac{(-\mathi)^k}{k!} \left( \nabla^k f \right)(Z) (2\pi)^{-n} \lim_{\epsilon \to 0^+} \iint \chi(\epsilon X, \epsilon Y) \, g(Z+Y) \, (\Theta \nabla_Y)^{\otimes k} \mathe^{\mathi X \cdot Y} \total^n X \total^n Y \eqend{,}
\end{splitequation}
where the sum over $k$ is to be understood in the sense of formal power series in $\Theta$. We then integrate the derivatives with respect to $Y$ by parts which is permissible again because $\chi \in \mathscr{S}(\mathbb{R}^{2n})$, and note that derivatives acting on $\chi$ result in powers of $\epsilon$ and hence vanish in the limit $\epsilon \to 0$. Using then equation~\eqref{eq:integral_delta}, we obtain
\begin{splitequation}
\label{eq:product_exp_derivative}
\left( f \times_\Theta g \right)(z) &= \sum_{k=0}^\infty \frac{\mathi^k}{k!} \left( \nabla^k f \right)(Z) \left( (\Theta \nabla)^{\otimes k} g \right)(Z) \\
&= \Big[ \exp\!\big( \mathi \Theta^{\mu\nu} \partial_{X^\mu} \partial_{Y^\nu} \big) f(X) g(Y) \Big]_{X = Y = Z} \eqend{,}
\end{splitequation}
which is valid as formal power series in $\Theta$. It becomes non-formal (i.e, strict), if the series of derivatives converges, i.e., if $f$ and $g$ are analytic functions. Note however that the original definition~\ref{def:defpro} is valid for all functions $f,g \in \mathcal{D}$.

We then use the inverse parametrization~\eqref{eq:popb} to compute
\begin{splitequation}
\left[ \vec{x}^a, \vec{x}^b \right]_\Theta &= \vec{x}^a \vec{x}^b + \mathi \left[ \big( \partial_{X^\mu} \vec{x}^a \big) \Theta^{\mu\nu} \big( \partial_{Y^\nu} \vec{y}^b \big) \right]_{X = Y} + \bigo{\Theta^2} - (a \leftrightarrow b) \\
&= \mathi \left[ \frac{1}{H (X^0 + X^n)} \delta^a_\mu - \frac{X^a}{H (X^0 + X^n)^2} \left( \delta^0_\mu - \frac{X_\mu}{X^n} \right) \right] \Theta^{\mu\nu} \\
&\qquad\times \left[ \frac{1}{H (X^0 + X^n)} \delta^b_\nu - \frac{X^b}{H (X^0 + X^n)^2} \left( \delta^0_\nu - \frac{X_\nu}{X^n} \right) \right] - (a \leftrightarrow b) + \bigo{\Theta^2} \\
\end{splitequation}
where we recall that $X^n$ is a shorthand for $\sqrt{ H^{-2} + (X^0)^2 - \delta_{ab} X^a X^b} = \sqrt{ H^{-2} - \eta_{\mu\nu} X^\mu X^\nu }$ according to Def.~\ref{def:extension}. Using that from the relations~\eqref{eq:popa} it follows that $H (X^0 + X^n) = \exp(H t)$ and $X^a = \exp(H t) \vec{x}^a$, the result follows. In the same way, we obtain
\begin{splitequation}
\left[ t, \vec{x}^a \right]_\Theta &= \mathi \left[ \big( \partial_{X^\mu} t \big) \Theta^{\mu\nu} \big( \partial_{Y^\nu} \vec{y}^a \big) - \left( X \leftrightarrow Y \right) \right]_{X = Y} + \bigo{\Theta^2} \\
&= \mathi \bigg[ \frac{1}{H (X^0 + X^n)} \left( \delta^0_\mu - \frac{X_\mu}{X^n} \right) \Theta^{\mu\nu} \left[ \frac{1}{H (Y^0 + Y^n)} \delta^a_\nu - \frac{Y^a}{H (Y^0 + Y^n)^2} \left( \delta^0_\nu - \frac{Y_\nu}{Y^n} \right) \right] \\
&\qquad\qquad- \left( X \leftrightarrow Y \right) \bigg]_{X = Y} + \bigo{\Theta^2} \eqend{,}
\end{splitequation}
which with $H (X^0 + X^n) = \exp(H t)$ and $X^a = \exp(H t) \vec{x}^a$ gives the stated result.
\end{proof}
\begin{corollary}
The resulting non-commutative spacetime given in Theorem~\ref{thm:crds} becomes the Moyal--Weyl spacetime in the flat limit $H \to 0$, i.e.,
\begin{equation}
\lim_{H \to 0} \left[ x^\mu, x^\nu \right]_\Theta = 2 \mathi \Theta^{\mu\nu} \eqend{.}
\end{equation}
\end{corollary}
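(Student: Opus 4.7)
The plan is to read the result directly off the explicit first-order expressions given in Theorem~\ref{thm:crds}. All the $H$-dependence on the right-hand sides of~\eqref{eq:tx_xx_comm} is concentrated in the overall prefactor $\mathe^{-2Ht}$, in the vector $F^\mu$, and in the explicit factors of $H$ multiplying the correction terms of the spatial commutator; so the task reduces to tracking each of these in the limit.

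First, I would verify that $F^\mu \to 0$ componentwise as $H \to 0$. The $0$-component has numerator $\sinh(Ht) + \frac{H^2}{2}\mathe^{Ht}\vec{x}^2 = \bigo{H}$ while the denominator $\cosh(Ht) - \frac{H^2}{2}\mathe^{Ht}\vec{x}^2$ tends to $1$, so $F^0 \to 0$; the spatial components $F^a$ carry an explicit factor of $H$ in the numerator and therefore also vanish. Consequently $\Theta^{\rho b} F_\rho \to 0$, and combined with $\mathe^{-2Ht} \to 1$ the first relation in~\eqref{eq:tx_xx_comm} yields $\left[t,\vec{x}^b\right]_\Theta \to 2\mathi\,\Theta^{0b}$.

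Second, for the spatial commutator the explicit factors of $H$ in front of $\vec{x}^a$ and $\vec{x}^b$ in the correction terms already suffice to kill those contributions in the limit, regardless of the behaviour of $F$. One is left with $\left[\vec{x}^a,\vec{x}^b\right]_\Theta \to 2\mathi\,\Theta^{ab}$. Recalling from the discussion following~\eqref{eq:stmet1} that in the flat limit the intrinsic coordinates $(t,\vec{x}^a)$ approach the Minkowski coordinates $x^\mu$, the two relations assemble into the single Moyal--Weyl commutation relation $\left[x^\mu,x^\nu\right]_\Theta = 2\mathi\,\Theta^{\mu\nu}$.

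The main obstacle is essentially none at first order in $\Theta$: the statement reduces to evaluating elementary pointwise limits. The only conceptual subtlety worth flagging is that Theorem~\ref{thm:crds} supplies the commutation relations only to first order in $\Theta$, so taken at face value the corollary is to be read at the same order; a cleaner route to the statement to all orders would be to invoke Proposition~\ref{prop:limit} directly---asserting that the deformed product degenerates to the flat Rieffel product as $H\to 0$---and then appeal to the standard Moyal--Weyl relations for the Rieffel product on Minkowski space referenced after equation~\eqref{eq:mwst}.
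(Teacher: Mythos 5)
Your proposal is correct and follows essentially the same route as the paper: the paper's proof likewise reads the first-order statement directly off Eqs.~\eqref{eq:tx_xx_comm} and appeals to Proposition~\ref{prop:limit} for the general (all orders in $\Theta$) case. Your explicit verification that $F^\mu \to 0$ and that the $H$-weighted correction terms drop out simply fills in the elementary limits the paper leaves implicit.
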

\begin{proof}
To first order in $\Theta$, this can be seen directly from Eqs.~\eqref{eq:tx_xx_comm}, and in general it follows from Proposition~\ref{prop:limit}.
\end{proof}

In particular, since deformation of the Minkowski spacetime using the Rieffel product leads to the non-commutative Moyal-Weyl spacetime, this shows that the deformation and the limit commute, which we display in the commuting diagram of Fig.~\ref{pic:commuting}.
\begin{figure}
%\begin{tikzpicture}
%\matrix (m) [matrix of math nodes,row sep=5em,column sep=6em,minimum width=3em]
%{ \mathbbm{d}\mathbb{S}_n\vphantom{{}^1} & \mathbb{R}^{1,n-1}\vphantom{{}_n} \\ 2 \mathi\Theta(H) & 2 \mathi \Theta \\};
%\path[-stealth]
%(m-1-1) edge node [left] {$\times_{\Theta(H)}$} (m-2-1)
%edge node [below] {$\displaystyle\lim_{H \to 0}$} (m-1-2)
%(m-2-1.east|-m-2-2) edge node [below] {$\displaystyle\lim_{H \to 0}$}
%node [above] {$\,$} (m-2-2)
%(m-1-2) edge node [right] {$\times_\Theta$} (m-2-2);
%\end{tikzpicture}
\includegraphics{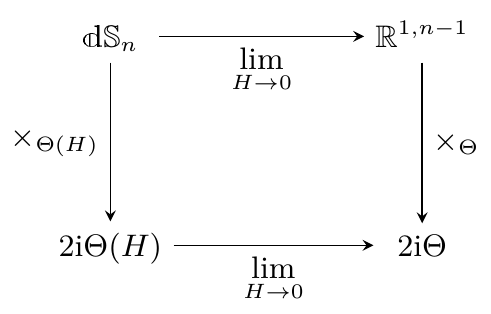}
\caption{Commuting diagram for deformations of de~Sitter and Minkowski spacetime. Here, $\Theta(H)$ represents the deformed commutator relations given in Theorem~\ref{thm:crds}, and $\mathbbm{d}\mathbb{S}_n$ is the (expanding Poincar{\'e} patch) of de Sitter spacetime.}
\label{pic:commuting}
\end{figure}

At this point, we want to mention that the deformed commutator relations fulfill the Jacobi identity as a direct consequence of the associativity of the deformed product, Proposition~\ref{prop:associative}.

\subsection{Physical Interpretation}

Several motivations, both mathematical and physical, lead to the assumption of a non-commutative spacetime. One physical argument, for example, is the geometrical measurement problem~\cite{Ahluwalia1993,DFR} as explained in the introduction. An example where a non-commutative space is physically realized is the Landau quantization. This physical situation occurs for non-relativistic electrons confined to a plane in the presence of a homogenous magnetic field, see Ref.~\onlinecite{Muc1}.

If one, however, assumes that a non-commutative spacetime, a quantum spacetime, can resolve the Big Bang singularity, it is clear that a deformed flat spacetime (of Moyal--Weyl type) is not a satisfying model. Instead, one needs to consider curved (or generically dynamical) non-commutative spacetimes, such as the one we have obtained in Theorem~\ref{thm:crds}. Moreover, a striking feature of the non-commutativity has appeared naturally: as time passes and the universe expands, the non-commutativity dynamically vanishes. After the end of inflation, the non-commutativity is completely negligible even if at its beginning it is of order one, since the overall multiplying factor $\exp(- H t)$ decreases at least by $\mathe^{-60}$ from the beginning to the end of inflation, see Ref.~\onlinecite[Ch.~12]{cos06}. Since the assumption of a constant non-commutative spacetime, even for very small non-commutativity, leads to violations of Lorentz symmetry which have not been observed~\cite{LVB16}, such models are unsustainable; even if one fundamentally violates Lorentz symmetry only at the Planck scale, quantum fluctuations of the metric enhance this breaking also for low energies~\cite{Knorr2018}. In contrast, the dynamical non-commutativity that results in our proposal can easily accommodate no detectable violations of Lorentz symmetry during most of the cosmological evolution including today, and only modifies the inflationary period and the Big Bang singularity. Concrete predictions and results from this model are under investigation.

\section{Deforming quantum field theory}
\label{sec:deform_qft}

Rigorous deformations of quantum field theory in the Minkowski spacetime have been initiated by Ref.~\onlinecite{GL1,GL2} and connected to the Rieffel product in Ref.~\onlinecite{BLS}, see also Refs.~\onlinecite{GL4,GL5,MUc} and the reviews~\onlinecite{douglasnekrasovreview,SZNCQFT}. However, for deformations of QFT in Minkowski spacetime the two-point function of the deformed quantum field $\phi_\Theta$ is identical to the one of the undeformed quantum field $\phi$. Since the measured scalar power spectrum of the cosmic microwave background is directly related to the two-point function of the metric perturbations~\cite{MFBcosmo}, this deformation thus has no observable effects, which is is quite unsatisfactory. As we will see in this section, by deforming the QFT in de Sitter spacetime using the deformed Rieffel product defined in section~\ref{sec_deformed}, we obtain a deformed non-commutative model whose two-point function differs from the undeformed one, and which could therefore give observable effects.

To construct the deformed model, we follow the same steps of the construction as in Minkowski spacetime~\cite{GL1, GL2, MUc}. That is, we first construct the undeformed QFT using canonical quantization. In a second step, we need to define a unitary group that implements the deformation action of $\mathbb{R}^n$ on the quantum fields, which we do by constructing the infinitesimal generators of translations and exponentiating them. In Minkowski space, these are just the global momentum operators, but we will see that in our case they are not global but depend on the point where they act. Lastly, we define the deformed quantum field by taking our deformed Rieffel product between the quantum field and the elements of the unitary group, and compute its two-point function.

\subsection{Canonical quantization}

For simplicity, we consider a free massless and conformally coupled field $\phi$ that fulfills the Klein--Gordon equation:
\begin{equation}
\left( \nabla^2 - \frac{n (n-2)}{4} H^2 \right) \phi(x) = 0 \eqend{,}
\end{equation}
where the term proportional to $H^2$ results from the conformal coupling.\cite{AllenDeSitter} In the Poincar{\'e} patch with metric~\eqref{eq:stmet1}, this reads
\begin{equation}
\label{eq:phi_kg}
\left[ - \partial_t^2 - (n-1) H \partial_t + \exp (- 2 H t) \laplace - \frac{n (n-2)}{4} H^2 \right] \phi(t,\vec{x}) = 0 \eqend{,}
\end{equation}
where $\laplace \equiv \delta^{ab} \partial_{\vec{x}^a} \partial_{\vec{x}^b}$. The canonical momentum is given by
\begin{equation}
\pi = \sqrt{-\det g} \partial_t \phi = \exp\left[ (n-1) H t \right] \partial_t \phi \eqend{,}
\end{equation}
and following canonical quantization, the mode expansion of the field operator reads
\begin{equation}
\label{eq:qf}
\phi(\vec{x},t) = \int \left[ a(\vec{p}) f(\vec{p},t) + a^\dagger(-\vec{p}) f^*(-\vec{p},t) \right] \mathe^{\mathi \vec{p} \vec{x}} \frac{\total^{n-1} p}{(2\pi)^{n-1}} \eqend{,}
\end{equation}
with creation and annihilation operators fulfilling the commutation relation
\begin{equation}
\label{eq:caop}
\left[ a(\vec{p}), a^\dagger(\vec{q}) \right] = (2\pi)^{n-1} \delta^{n-1}(\vec{p}-\vec{q}) \eqend{.}
\end{equation}
With the mode functions normalized to
\begin{equation}
\label{eq:phi_mode_norm}
f(\vec{p},t) \partial_t f^*(\vec{p},t) - f^*(-\vec{p},t) \partial_t f(-\vec{p},t) = \mathi \exp\left[ - (n-1) H t \right] \eqend{,}
\end{equation}
we have the canonical commutation relations
\begin{equation}
\left[ \phi(\vec{x},t), \pi(\vec{y},t) \right] = \mathi \delta^{n-1}(\vec{x}-\vec{y}) \eqend{.}
\end{equation}
Plugging the expansion~\eqref{eq:qf} into the Klein--Gordon equation~\eqref{eq:phi_kg}, the mode functions are seen to fulfill
\begin{equation}
\left[ \partial_t^2 + (n-1) H \partial_t + \exp(- 2 H t) \vec{p}^2 + \frac{n (n-2)}{4} H^2 \right] f(\vec{p},t) = 0 \eqend{,}
\end{equation}
which has the positive-frequency solution (normalized according to equation~\eqref{eq:phi_mode_norm})
\begin{splitequation}
f(\vec{p},t) &= \frac{1}{\sqrt{2 \abs{\vec{p}}}} \exp\left[ - \frac{n-2}{2} H t + \mathi \big[ \exp( - H t ) - 1 \big] \frac{\abs{\vec{p}}}{H} \right] \\
&\to \frac{1}{\sqrt{2 \abs{\vec{p}}}} \exp\left( - \mathi \abs{\vec{p}} t \right) \quad (H \to 0) \eqend{.}
\end{splitequation}
The operator $a(\vec{p})$ annihilates the vacuum vector $\ket{0}$, which with the above choice of mode functions is the Bunch--Davies vacuum~\cite{BD}. By standard methods we obtain the one-particle Hilbert space $\mathscr{H}$\footnote{One way is to apply the creation operator on the vacuum vector to create one-particle states, and construct the one-particle Hilbert space as completion of their linear span. Another way to obtain this space is to construct it explicitly by using the symplectic structure and a complex structure, as in Ref.~\onlinecite{MO}.}, from which we the define the bosonic Fock space $\mathscr{F}$ as the completion of $\mathscr{F}_0 = \displaystyle\bigoplus_{k=0}^{\infty} \mathscr{H}_k$, i.e., $\mathscr{F} \equiv \overline{\mathscr{F}_0}$  with $\mathscr{H}_0 = \mathbb{C}$ and $\mathscr{H}_k = \mathscr{H}^{\otimes_S k}$, where $\otimes_S$ denotes the symmetric tensor product.

\subsection{The deformed quantum field}

Since our deformed product is obtained from the action of $\mathbb{R}^n$ in the embedding space, we now construct the operators that implement infinitesimal translations in embedding space. Because these translations commute, we obtain in this way $n$ linearly independent commuting operators $P_\mu$ acting as
\begin{equation}
\left[ P_\mu(X), \phi(X) \right] = - \mathi \partial_{X^\mu} \phi(X) \eqend{,}
\end{equation}
where we anticipated that they depend on the point where they act. Here we recall that $\phi(X)$ denotes the extension of $\phi(x)$ [defined by the expansion~\eqref{eq:qf}] to the embedding space according to Def.~\ref{def:extension}. From the inverse parametrization~\eqref{eq:popb}, we compute
\begin{equations}
\frac{\partial t}{\partial X^0} &= \frac{1}{H X^n} \eqend{,} \\
\frac{\partial \vec{x}^a}{\partial X^0} &= - \frac{\vec{x}^a}{X^n} \eqend{,} \\
\frac{\partial t}{\partial X^b} &= - \frac{X_b}{H X^n (X^0 + X^n)} = - \frac{\vec{x}_b}{X^n} \eqend{,} \\
\frac{\partial \vec{x}^a}{\partial X^b} &= \frac{1}{H (X^0 + X^n)} \delta^a{}_b + \frac{H \vec{x}^a \vec{x}_b}{X^n} \eqend{,}
\end{equations}
where we recall that $X^n = \sqrt{ H^{-2} + (X^0)^2 - \delta_{ab} X^a X^b }$, and obtain in intrinsic coordinates
\begin{equation}
\label{eq:p0phi_comm}
\left[ P_0(X), \phi(X) \right] = - \mathi \frac{\partial x^\mu}{\partial X^0} \frac{\partial}{\partial x^\mu} \phi(X) = - \frac{\mathi}{H X^n} \left[ \partial_t \phi(x) - H \vec{x}^b \partial_{x^b} \phi(x) \right]
\end{equation}
and
\begin{equation}
\label{eq:piphi_comm}
\left[ P_a(X), \phi(X) \right] = - \frac{\mathi}{H X^n} \left[ \frac{X^n}{X^0 + X^n} \partial_{\vec{x}^a} \phi(x) + H^2 \vec{x}_a \vec{x}^b \partial_{\vec{x}^b} \phi(x) - H \vec{x}_a \partial_t \phi(x) \right] \eqend{.}
\end{equation}
As $H \to 0$, we have $H X^n \to 1$ and $H X^0 \to 0$~\eqref{eq:popa}, and recover the correct flat limit. To determine the actual form of the $P_\mu(X)$, we use the expansion~\eqref{eq:qf}, make an ansatz quadratic in creation and annihilation operators and   evaluate the unknown coefficients by requiring~\eqref{eq:p0phi_comm} and~\eqref{eq:piphi_comm} to hold. The derivation and result are given in App.~\ref{sec:caop}. By standard arguments~\cite{RS2}, the so-defined operators $P_\mu$ are essentially self-adjoint operators on a dense domain $\mathcal{D}_1 \subset \mathscr{F}$ for each fixed point $X$, and thus in general. From the explicit result of App.~\ref{sec:caop}, the $P_\mu$ do not annihilate the vacuum vector $\ket{0}$, contrary to the Minkowski case (even though their expectation value vanishes). We conclude that they do \emph{not} coincide with the standard momentum operators that can be defined in de Sitter QFT~\cite{DS1, DS2, DS3} and which do annihilate the vacuum vector. However, for us this is a good result and the essential reason why we are able to obtain a two-point function that differs from the undeformed one in the following. From the operators $P_\mu(X)$, one obtains the unitary group implementing finite translations again in the standard way:
\begin{definition}
\label{def:unitary}
For all points $X^\mu = (X^0, X^a)$, $Y^\mu$ and $Z^\mu$ in the embedding space, a two-parameter group of unitary operators is uniquely defined by
\begin{equation}
U(X,Y) = U(Y,X)^\dagger \eqend{,} \quad U(X,X) = \1 \eqend{,} \quad \partial_{X^\mu} U(X,Y) = - \mathi P_\mu(X) \, U(X,Y) \eqend{.}
\end{equation}
\end{definition}

Completely analogous to the Minkowski case~\cite{GL1,GL2,BLS}, we then define the deformed field operators:
\begin{definition}
\label{defqfx}
Let $\Theta$, $\chi$ and the dot product be as in the definition of the deformed product~\ref{def:defpro}, let $\phi(x)$ be the free scalar field~\eqref{eq:qf}, and let $\tau$ denote the automorphism defined by the adjoint action of the unitary group defined in~\ref{def:unitary}:
\begin{equation}
\label{eq:tau_phi_def}
\tau_Y \phi(X) = U(X+Y,X) \phi(X) U(X,X+Y) = \phi(X+Y) \eqend{.}
\end{equation}
The deformed quantum field $\phi_\Theta(x)$ is then formally defined by warped convolution as
\begin{equation}
\label{defqfx2}
\phi_\Theta(z) \equiv (2\pi)^{-n} \lim_{\epsilon \to 0^+} \iint \chi(\epsilon X,\epsilon Y) \, \tau_{\Theta X}\left( \phi(Z) \right) \, U(Z+Y,Z) \, \mathe^{\mathi X \cdot Y} \total^n X \total^n Y \eqend{,}
\end{equation}
whenever the limit exists, and the deformed field ${}_\Theta\phi(x)$ is defined by
\begin{equation}
\label{defqfx3}
{}_\Theta\phi(z) \equiv (2\pi)^{-n} \lim_{\epsilon \to 0^+} \iint \chi(\epsilon X,\epsilon Y) \, U(Z,Z+Y) \, \tau_{\Theta X}\left( \phi(Z) \right) \, \mathe^{\mathi X \cdot Y} \total^n X \total^n Y \eqend{.}
\end{equation}
In order to define the field (an operator-valued distribution) rigorously, one has to smear it with a real Schwartz function $f \in \mathscr{S}(\mathbb{R}^n)$\footnote{The reality condition ensures the self-adjointness of the scalar field.}, and we denote the smeared field by $\phi_\Theta(f)$.
\end{definition}
It was proven in Ref.~\onlinecite[Lemma 2.2]{BS} that the two definitions $\phi_\Theta$ and ${}_\Theta\phi$ agree in Minkowski space, and the proof can be taken over. Moreover, comparing with the definition of the deformed product~\ref{def:defpro} one has formally
\begin{equation}
\phi_\Theta(z) = \left( \phi \times_\Theta U(\cdot,z) \right)(z) \eqend{,} \qquad {}_\Theta\phi(z) = \left( U(z,\cdot) \times_\Theta \phi \right)(z) \eqend{.}
\end{equation}

Furthermore, in analogy with Refs.~\onlinecite[Prop. 2.2 b)]{GL1} and~\onlinecite[Prop. 3.1]{Muc1} we want to prove that the (smeared) deformed operators $\phi_\Theta(f)$ are well-defined operators in the Fock space~$\mathscr{F}$:
\begin{proposition}
Let us consider the smeared deformed field $\phi_{\Theta}(f)$, and let $\Phi, \Psi \in \mathscr{F}$ be such that $\norm{ P_\mu(X_1) \cdots P_\mu(X_k) \Phi } < \infty$ for all $k \leq 2n+2$. Then the scalar product $\expect{ \Phi, {}_\Theta\phi(f) \Psi } = \expect{ \Phi, \phi_\Theta(f) \Psi }$ is bounded. Hence the deformation of the free field in de Sitter spacetime, given by the deformed field $\phi_{\Theta}(f)$ of Def.~\ref{defqfx} is well-defined.
\end{proposition}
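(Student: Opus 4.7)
The plan is to prove the proposition by directly estimating the matrix element $\expect{\Phi,\phi_\Theta(f)\Psi}$ and showing that the oscillatory integral underlying Def.~\ref{defqfx} converges absolutely after enough integrations by parts, with a bound uniform in the regulator $\epsilon$. First I would smear~\eqref{defqfx2} against $f$, interchange the $z$- and $(X,Y)$-integrations, and use~\eqref{eq:tau_phi_def} to replace $\tau_{\Theta X}\phi(Z)$ by $\phi(Z+\Theta X)$, reducing the task to controlling a triple oscillatory integral of $\chi(\epsilon X,\epsilon Y)\,f(z)\,\expect{\Phi,\phi(Z+\Theta X)\,U(Z+Y,Z)\,\Psi}\,\mathe^{\mathi X\cdot Y}$.

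Next I would gain polynomial decay in $X$ and $Y$ by the standard oscillatory-integral trick based on $(1+\abs{X}^2)^M\,\mathe^{\mathi X\cdot Y}=(1-\laplace_Y)^M\,\mathe^{\mathi X\cdot Y}$ and its $X\leftrightarrow Y$ counterpart. Integrating by parts up to $2n+2$ times in each variable, one arranges for an $L^1$-decay factor in $(X,Y)$. Derivatives falling on $\chi(\epsilon X,\epsilon Y)$ produce factors of $\epsilon$ and vanish in the limit, exactly as was used in the proof of Theorem~\ref{thm:crds}. Derivatives in $X$ fall on $\phi(Z+\Theta X)$ and produce spatial derivatives of the free field, which remain well-defined operator-valued distributions. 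Derivatives in $Y$ fall on $U(Z+Y,Z)$ and, by Def.~\ref{def:unitary}, each produces an insertion of $-\mathi P_\mu(Z+Y)$ to the left of $U$.

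The bound then follows by Cauchy--Schwarz after moving each inserted $P_{\mu_j}(Z+Y)$ onto the $\Phi$-side using the essential self-adjointness of $P_\mu$ on $\mathcal{D}_1$, and commuting it past $\phi(Z+\Theta X)$ via~\eqref{eq:p0phi_comm}--\eqref{eq:piphi_comm}, which only introduces further derivatives of $\phi$. The hypothesis $\norm{P_\mu(X_1)\cdots P_\mu(X_k)\Phi}<\infty$ for $k\le 2n+2$ is then exactly what is needed to control the resulting factor on the $\Phi$-side; unitarity of $U$ handles the $\Psi$-side as $\norm{\Psi}$; and the Schwartz property of $f$ together with the polynomial decay in $(X,Y)$ gained in the previous step produces an $\epsilon$-independent absolutely integrable bound. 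Dominated convergence then delivers the limit $\epsilon\to 0^+$ and the asserted boundedness. The equality $\phi_\Theta(f)={}_\Theta\phi(f)$ follows from the analogous estimate applied to~\eqref{defqfx3}, combined with the argument of Lemma~2.2 of Ref.~\onlinecite{BS}, which carries over once~\eqref{eq:tau_phi_def} is available.

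The main obstacle is the point-dependence of the generators $P_\mu(X)$: in contrast to the Minkowski case, different $P_\mu(X)$ at different embedding points need not commute, and their commutators with $\phi$ in~\eqref{eq:p0phi_comm}--\eqref{eq:piphi_comm} carry coefficients involving $1/X^n$ and products of the intrinsic coordinates $\vec{x}^a$. One must verify that the polynomial and inverse-hyperbolic growth in $X,Y$ arising after the repeated commutations is dominated by the Schwartz decay gained from the integrations by parts, uniformly on the support of $f$, and this bookkeeping is what fixes the specific number $2n+2$ of moments required on $\Phi$.
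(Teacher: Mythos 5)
Your proposal follows essentially the same route as the paper: the paper obtains the polynomial decay in $(X,Y)$ by citing Rieffel's oscillatory-integral estimate (Ineq.~1.4 of Ref.~\onlinecite{R}) with $i+j\leq 2k$, $k=n+1$ — which is precisely your integration-by-parts argument and yields the same count $2n+2$ — and then proceeds exactly as you do, via Cauchy--Schwarz, unitarity of $U$, boundedness of the smeared undeformed field under translated Schwartz seminorms, and the moment hypothesis on $\Phi$ to absorb the $P_\mu$ insertions produced by $Y$-derivatives of $U$. The only cosmetic differences are that the paper works directly with ${}_\Theta\phi$ (so the $P_\mu$'s land on the $\Phi$-side without commuting past $\phi$) and uses the explicit splitting $f=\tilde f\,(1+\delta_{\mu\nu}Z^\mu Z^\nu)^{-n}$ to get an $L^1$ weight for the $U$-smearing.
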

\begin{proof}
We have
\begin{equation}
\label{eq:thm_scalarprod} 
\expect{ \Phi, {}_\Theta\phi(f) \Psi } = (2\pi)^{-n} \lim_{\epsilon \to 0^+} \iint \chi(\epsilon X,\epsilon Y) \, b_\Theta(X,Y,f) \, \mathe^{\mathi X \cdot Y} \total^n X \total^n Y
\end{equation}
with $b_\Theta(X,Y,f)$ formally given by
\begin{equation}
b_\Theta(X,Y,f) = \int f(Z) \, \expect{ \Phi, U(Z,Z+Y) \, \tau_{\Theta X}\left( \phi(Z) \right) \Psi } \total^n Z \eqend{.}
\end{equation}
By Ref.~\onlinecite[Ineq.~1.4]{R} and choosing a suitable norm, the oscillatory integral in equation~\eqref{eq:thm_scalarprod} converges absolutely and we have for some $k > n$ (e.g., $k = n+1$) the estimate
\begin{equation}
\abs{ \expect{ \Phi, {}_\Theta\phi(f) \Psi } } \leq c_k \sum_{i+j \leq 2k} \sup_{X,Y} \abs{ \nabla^i_X \nabla^j_Y b_\Theta(X,Y,f) }
\end{equation}
for some constant $c_k$. We use a standard trick and decompose the test function as
\begin{equation}
f(Z) = \left[ f(Z) (1 + \delta_{\mu\nu} Z^\mu Z^\nu)^n \right] (1 + \delta_{\mu\nu} Z^\mu Z^\nu)^{-n} \equiv \tilde{f}(Z) g(Z) \eqend{,}
\end{equation}
where $\tilde{f}(Z) \equiv f(Z) (1 + \delta_{\mu\nu} Z^\mu Z^\nu)^n$ is still a Schwartz test function. We then compute for $i = j = 0$ that
\begin{splitequation}
\sup_{X,Y} \abs{ b_\Theta(X,Y,f) } &= \sup_{X,Y} \abs{ \int f(Z) \, \expect{ \Phi, U(Z,Z+Y) \, \tau_{\Theta X}\left( \phi(Z) \right) \Psi } \total^n Z } \\
&= \sup_{X,Y} \abs{ \int \tilde{f}(Z) g(Z) \, \expect{ U(Z,Z+Y)^\dagger \Phi, \tau_{\Theta X}\left( \phi(Z) \right) \Psi } \total^n Z } \\
&\leq \sup_{X,Y} \norm{ U(g,Y)^\dagger \Phi } \norm{ \left( \tau_{\Theta X} \phi \right)(\tilde{f}) \Psi } \eqend{,}
\end{splitequation}
where we used the Cauchy--Schwarz inequality, and we denote by $U(g,Y)^\dagger$ the smearing of $U(Z,Z+Y)^\dagger$ in $Z$ with $g$. Since $U$ is a unitary operator and thus of norm 1, we have
\begin{equation}
\sup_Y \norm{ U(g,Y)^\dagger \Phi } \leq \norm{ g }_1 \norm{ \Phi }
\end{equation}
with $\norm{ g }_1 < \infty$ the $L_1$ norm of $g$ which is finite by construction. Furthermore, we have
\begin{equation}
\sup_X \norm{ \left( \tau_{\Theta X} \phi \right)(\tilde{f}) \Psi } = \sup_X \norm{ \phi\left( \tilde{f} \circ \tau_{-\Theta X} \right) \Psi } \leq c'(\Theta,\tilde{f}) \norm{ \Psi } \eqend{,}
\end{equation}
where we used that (because the smeared field $\phi$ is a well-defined operator in Fock space)
\begin{equation}
\norm{ \phi(f) \Psi } \leq c(f) \norm{ \Psi }
\end{equation}
holds for some positive constant $c$ depending on the Schwartz seminorms $\norm{ f }_{\alpha,\beta}$ of the test function $f$, and that (with $\norm{ \Theta }_\infty$ the maximum modulus of the components of $\Theta$)
\begin{equation}
\sup_X \norm{ \tilde{f} \circ \tau_{-\Theta X} }_{\alpha,\beta} = \sup_{X,Y} \abs{ Y^\beta \nabla^\alpha \tilde{f}(Y-\Theta X) } \leq \norm{ \Theta }_\infty^\beta \norm{ \tilde{f} }_{\alpha,\beta} \eqend{.}
\end{equation}
For derivatives w.r.t.\ $X$, one obtains instead derivatives of $\phi$ and powers of $\Theta$, which have again finite norm [possibly with a different constant $c'(f)$], while for derivatives w.r.t.\ $Y$ acting on $U$ we obtain powers of $P_\mu$. Requiring that the norm of these acting on $\Phi$ be finite, we obtain the stated restriction on the vector $\Phi$ and the conclusion.
\end{proof}

\subsection{Two-point function}

Lastly, we compute the two-point function of the deformed field, up to first order in $\Theta$. For this, we first expand the deformed field to first order in $\Theta$. Using the definition of the unitary group~\ref{def:unitary} and the automorphism $\tau$~\eqref{eq:tau_phi_def}, we obtain
\begin{equation}
\tau_{\Theta X}\left( \phi(Z) \right) = \phi(Z) - \mathi \Theta^{\mu\nu} X_\nu \left[ P_\mu(Z), \phi(Z) \right] + \bigo{\Theta^2} \eqend{.}
\end{equation}
Inserting this result into the definition~\eqref{defqfx2} and using~\eqref{eq:integral_delta}, it follows that
\begin{splitequation}
\label{eq:phitheta_firstorder}
\phi_\Theta(z) &= \phi(Z) - \Theta^{\mu\nu} (2\pi)^{-n} \lim_{\epsilon \to 0^+} \iint \chi(\epsilon X,\epsilon Y) \, \left[ P_\mu(Z), \phi(Z) \right] \, U(Z+Y,Z) \\
&\hspace{12em}\times \partial_{Y^\nu} \mathe^{\mathi X \cdot Y} \total^n X \total^n Y + \bigo{\Theta^2} \\
&= \phi(Z) + \Theta^{\mu\nu} \left[ P_\mu(Z), \phi(Z) \right] \, \partial_{Y^\nu} U(Z+Y,Z) \Big\rvert_{Y=0} + \bigo{\Theta^2} \\
&= \phi(Z) - \mathi \Theta^{\mu\nu} \left[ P_\mu(Z), \phi(Z) \right] P_\nu(Z) + \bigo{\Theta^2} \eqend{.}
\end{splitequation}
The computation of the two-point function is then long but straightforward, expressing $\phi$ and the $P_\mu$ in terms of annihilation and creation operators, acting with them on the vacuum vector $\ket{0}$ and computing the resulting momentum integrals. We obtain
\begin{theorem}
\label{thm:tpf}
The two-point function of the non-commutative scalar $\phi_\Theta$ in the $n$-dimensional de Sitter space reads
\begin{splitequation}
\label{eq:tpf}
\bra{0} \phi_\Theta(x) \phi_\Theta(y) \ket{0} &= \lim_{\epsilon \to 0^+} \Bigg[ \frac{\Gamma\left( \frac{n-2}{2} \right)}{2 (2\pi)^\frac{n}{2}} H^{n-2} \left[ 1 - Z(X,Y) + \mathi \epsilon \sgn(t-s) \right]^\frac{2-n}{2} \\
&\quad+ \mathi \frac{(n-2) \Gamma\left( \frac{n}{2} \right)}{4 (2\pi)^\frac{n}{2}} \left[ \frac{\exp( H s )}{X^n} + \frac{\exp( H t )}{Y^n} \right] \left( \vec{x}_a \Theta^{a0} + \Theta^{0b} \vec{y}_b - H \vec{x}_a \Theta^{ab} \vec{y}_b \right) \\
&\quad\qquad\times H^n \left[ 1 - Z(X,Y) + \mathi \epsilon \sgn(t-s) \right]^{-\frac{n}{2}} \Bigg] + \bigo{\Theta^2} \raisetag{2.1em}
\end{splitequation}
with the de Sitter invariant
\begin{splitequation}
Z(X,Y) &= H^2 \eta_{AB} X^A Y^B = \cos\left( H \mu(x,y) \right) \\
&= \cosh\left[ H (t-s) \right] - \frac{H^2}{2} \exp[ H (t+s) ] (\vec{x}-\vec{y})^2 \eqend{,}
\end{splitequation}
where $\mu(x,y)$ is the geodesic distance between $x$ and $y$ if they are connected by a geodesic.
\end{theorem}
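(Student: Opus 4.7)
The plan is to substitute the first-order expansion~\eqref{eq:phitheta_firstorder} of $\phi_\Theta$ into the two-point function, evaluate the vacuum matrix elements using the mode expansion~\eqref{eq:qf} of $\phi$ together with the explicit bilinear form of the generators $P_\mu$ derived in App.~\ref{sec:caop}, and finally reduce the resulting momentum integral to a derivative of the undeformed Wightman function.

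At zeroth order in $\Theta$ the expression is $\bra{0}\phi(X)\phi(Y)\ket{0}$, which is the standard Bunch--Davies two-point function of a massless, conformally coupled scalar on $n$-dimensional de Sitter spacetime and produces the first line of~\eqref{eq:tpf}. The first-order correction consists of the two summands
\begin{equation*}
-\mathi\Theta^{\mu\nu}\bra{0}\,[P_\mu(X),\phi(X)]\,P_\nu(X)\,\phi(Y)\ket{0} - \mathi\Theta^{\mu\nu}\bra{0}\,\phi(X)\,[P_\mu(Y),\phi(Y)]\,P_\nu(Y)\ket{0}\eqend{,}
\end{equation*}
which are related by $X\leftrightarrow Y$. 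The identity $[P_\mu(Z),\phi(Z)] = -\mathi\partial_{Z^\mu}\phi(Z)$ converts each commutator into a derivative of the field, so that each summand becomes a vacuum expectation value of one field, one field derivative, and one bilinear $P_\nu$. After inserting the mode expansions and applying Wick's theorem, only those contractions survive that pair one annihilator of the bilinear $P_\nu$ with a creator from one of the $\phi$'s and vice versa; the contributions from the particle-number non-conserving kernels of $P_\nu$ either vanish on the vacuum or create a three-particle state orthogonal to the sector reached by the remaining $\phi$, and thus drop out.

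Next I would translate the embedding-coordinate kernels of $P_\nu$ into intrinsic coordinates using~\eqref{eq:p0phi_comm} and~\eqref{eq:piphi_comm}. The associated Jacobian factors combine to supply the prefactor $\exp(Hs)/X^n + \exp(Ht)/Y^n$, while contraction with $\Theta^{\mu\nu}$ assembles the bracket $\vec x_a\Theta^{a0} + \Theta^{0b}\vec y_b - H\vec x_a\Theta^{ab}\vec y_b$. The remaining momentum integral is, up to an overall constant and a single extra power of $\abs{\vec p}$, exactly the integral that defines the undeformed Wightman function on de Sitter; the extra $\abs{\vec p}$ amounts to differentiating the undeformed two-point function once with respect to the de~Sitter invariant $Z(X,Y)$, which raises the exponent from $(2-n)/2$ to $-n/2$ and shifts the overall prefactor via $\frac{n-2}{2}\Gamma(\tfrac{n-2}{2}) = \Gamma(\tfrac{n}{2})$, thereby producing the second contribution in~\eqref{eq:tpf}.

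The main technical obstacle lies in the combinatorial bookkeeping of the several kernels defining $P_\mu$ in App.~\ref{sec:caop}: one must verify that the two $X\leftrightarrow Y$ summands combine symmetrically into the single bracket displayed in the theorem, that the would-be contributions from the non-number-conserving pieces of $P_\mu$ cancel as claimed, and that the $\mathi\epsilon\sgn(t-s)$ prescription of the undeformed Wightman function carries through unchanged to the deformed correction. Once this algebra is under control, no new special-function identities are required, since the residual integral is just a $Z$-derivative of the Bunch--Davies two-point function.
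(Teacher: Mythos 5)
Your overall strategy (insert the first-order expansion \eqref{eq:phitheta_firstorder}, evaluate on the vacuum with the mode expansion and the explicit kernels of App.~\ref{sec:caop}, reduce the leftover momentum integral to a derivative of the Bunch--Davies function) is the same as the paper's. But your Wick-contraction analysis contains a fatal error: you claim that the particle-number non-conserving kernels of $P_\nu$ ``either vanish on the vacuum or create a three-particle state orthogonal to the sector reached by the remaining $\phi$, and thus drop out.'' This is exactly backwards. In the second summand, $-\mathi\Theta^{\mu\nu}\bra{0}\phi(X)\,[P_\mu(Y),\phi(Y)]\,P_\nu(Y)\ket{0}$, the operator $P_\nu(Y)$ acts directly on $\ket{0}$; its number-conserving part $p_\nu\, a^\dagger a$ and its $r_\nu\, a a$ part annihilate the vacuum, while the $r_\nu^*\, a^\dagger a^\dagger$ kernel creates a \emph{two}-particle state, which the two remaining field operators $\partial\phi(Y)$ and $\phi(X)$ bring back to the vacuum. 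This contribution survives and is in fact the source of the entire $\bigo{\Theta}$ term: the $(n-2)$ prefactor and the $1/X^n+1/Y^n$ structure in \eqref{eq:tpf} descend directly from $r_0(\vec{p},X)=-\tfrac{n-2}{4}\tfrac{\mathi}{X^n}\exp[\cdots]$. The paper stresses precisely this point in the main text: the $P_\mu$ do \emph{not} annihilate the vacuum, and this is ``the essential reason'' the deformed two-point function differs from the undeformed one.

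Indeed, if your claim were correct the theorem would be false with a vanishing $\bigo{\Theta}$ correction. With only the $a^\dagger a$ kernel retained, the relation $[P_\mu(X),\phi(X)]=-\mathi\partial_{X^\mu}\phi(X)$ forces $p_\mu(\vec{p},X)$ to equal the logarithmic derivative of the mode function, so the first summand produces an integrand proportional to $p_\mu(\vec{p},X)\,p_\nu(\vec{p},X)$, symmetric in $\mu\nu$ and hence killed by the skew-symmetric $\Theta^{\mu\nu}$ --- the same mechanism that makes the Minkowski deformed two-point function coincide with the undeformed one. You must therefore keep the $r_\mu$, $r^*_\mu$ kernels throughout: they contribute directly in the second summand, and indirectly in the first summand through the shift $p_\mu f = -\mathi\, \mathe^{-\mathi\vec{p}\vec{x}}\partial_{X^\mu}\!\left(f\,\mathe^{\mathi\vec{p}\vec{x}}\right) + 2 r_\mu f^*$, whose $r_\mu$ piece is what survives the antisymmetrization. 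A minor further imprecision: the residual integral carries an extra factor $\vec{p}_b$ (not $\abs{\vec{p}}$), which the paper handles as a spatial derivative $\partial_{\vec{x}^b}$ of the scalar integral; via the chain rule on $Z(X,Y)$ this does raise the exponent from $(2-n)/2$ to $-n/2$ with $\tfrac{n-2}{2}\Gamma(\tfrac{n-2}{2})=\Gamma(\tfrac{n}{2})$, so that part of your outline is salvageable once the contraction bookkeeping is corrected.
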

\begin{proof}
See App.~\ref{proof:tpf}.
\end{proof}
For applications in inflationary cosmology, one needs to consider the two-point function at equal times. We have
\begin{corollary}
\label{corr:latetime_tpf}
For $t \to \infty$, the leading behavior of the the two-point function of the non-commutative scalar $\phi_\Theta$ at equal times is given by
\begin{splitequation}
\bra{0} \phi_\Theta(x) \phi_\Theta(y) \ket{0} &\sim \frac{\Gamma\left( \frac{n-2}{2} \right)}{4 \pi^\frac{n}{2}} \left[ (\tilde{\vec{x}}-\tilde{\vec{y}})^2 \right]^\frac{2-n}{2} \\
&\quad+ \mathi \frac{(n-2) \Gamma\left( \frac{n}{2} \right)}{\pi^\frac{n}{2}} H \exp( - H t ) \left( \tilde{\vec{x}}_a \Theta^{a0} + \Theta^{0b} \tilde{\vec{y}}_b \right) + \bigo{\Theta^2} \eqend{,}
\end{splitequation}
where $\tilde{\vec{x}} \equiv \exp( H t ) \vec{x}$ are the physical (co-moving) coordinates~\cite{MFBcosmo}. We see that the corrections coming from the non-commutativity are again exponentially suppressed.
\end{corollary}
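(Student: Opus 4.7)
The plan is to take the exact two-point function from Theorem \ref{thm:tpf}, specialize to equal times $s=t$, and extract the leading behavior in the limit $t\to\infty$ after re-expressing the result in physical (co-moving) coordinates $\tilde{\vec{x}}=\exp(Ht)\vec{x}$, $\tilde{\vec{y}}=\exp(Ht)\vec{y}$. At equal times the de Sitter invariant collapses to
\begin{equation*}
1 - Z(X,Y) = \frac{H^2}{2}\exp(2Ht)(\vec{x}-\vec{y})^2 = \frac{H^2}{2}(\tilde{\vec{x}}-\tilde{\vec{y}})^2 \eqend{,}
\end{equation*}
which is finite and $t$-independent in the new variables. Since the right-hand side is strictly positive (away from coincidence), the $\mathi\epsilon$ regulator plays no role and can be dropped at the end.

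For the undeformed first term in \eqref{eq:tpf}, the plan is simply to substitute this value of $1-Z$ and use $H^{n-2}[\tfrac{H^2}{2}(\tilde{\vec{x}}-\tilde{\vec{y}})^2]^{(2-n)/2} = 2^{(n-2)/2}[(\tilde{\vec{x}}-\tilde{\vec{y}})^2]^{(2-n)/2}$ to match the prefactors in the stated asymptotic form, which gives the flat-space Euclidean propagator in co-moving coordinates.

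For the $\bigo{\Theta}$ correction, I will first expand $X^n = H^{-1}\cosh(Ht) - \tfrac{H}{2}\exp(Ht)\vec{x}^2$ at large $t$: since $\vec{x}=\exp(-Ht)\tilde{\vec{x}}\to 0$ for fixed $\tilde{\vec{x}}$, the hyperbolic cosine dominates and $X^n \sim \tfrac{1}{2H}\exp(Ht)$, with an analogous expansion for $Y^n$. Hence at equal times
\begin{equation*}
\frac{\exp(Hs)}{X^n} + \frac{\exp(Ht)}{Y^n} \sim 4H + \bigo{\exp(-2Ht)} \eqend{.}
\end{equation*}
Then I rewrite the bracket $(\vec{x}_a\Theta^{a0} + \Theta^{0b}\vec{y}_b - H\vec{x}_a\Theta^{ab}\vec{y}_b)$ in physical coordinates, obtaining an overall factor $\exp(-Ht)$ from the first two contributions and an extra $\exp(-Ht)$ from the third, so the latter is subleading. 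Combining these expansions with $H^n[\tfrac{H^2}{2}(\tilde{\vec{x}}-\tilde{\vec{y}})^2]^{-n/2} = 2^{n/2}[(\tilde{\vec{x}}-\tilde{\vec{y}})^2]^{-n/2}$ and collecting constants reproduces the claimed $\mathi\,(n-2)\Gamma(n/2)\pi^{-n/2}H\exp(-Ht)(\tilde{\vec{x}}_a\Theta^{a0} + \Theta^{0b}\tilde{\vec{y}}_b)$ behavior, up to $\bigo{\Theta^2}$.

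The only genuine subtlety is bookkeeping: one must track which terms in $X^n$ and in the $\Theta$-bracket are leading once the coordinate change is performed, and verify that the corrections dropped are indeed exponentially suppressed relative to the retained ones. The exponential suppression $\exp(-Ht)$ of the non-commutative correction then follows transparently, mirroring the interpretation given after Theorem \ref{thm:crds} that the non-commutativity dynamically dilutes as the universe expands.
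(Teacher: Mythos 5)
Your proposal is correct and follows exactly the route the paper implicitly takes for this corollary: specialize Theorem~\ref{thm:tpf} to $s=t$ so that $1-Z=\tfrac{H^2}{2}(\tilde{\vec{x}}-\tilde{\vec{y}})^2$, expand $X^n\sim\tfrac{1}{2H}\exp(Ht)$ and the $\Theta$-bracket in co-moving coordinates, and collect the powers of $2$ and $H$. One remark: your (correct) bookkeeping $H^n\left[\tfrac{H^2}{2}(\tilde{\vec{x}}-\tilde{\vec{y}})^2\right]^{-n/2}=2^{n/2}\left[(\tilde{\vec{x}}-\tilde{\vec{y}})^2\right]^{-n/2}$ leaves a factor $\left[(\tilde{\vec{x}}-\tilde{\vec{y}})^2\right]^{-n/2}$ multiplying the $\bigo{\Theta}$ term that is absent from the corollary as printed (and is needed for dimensional consistency with the first term), so your derivation actually yields the correct expression rather than literally ``reproducing'' the printed one; the conclusion about exponential suppression is unaffected.
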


\begin{remark}
Assuming that $\Theta^{\mu\nu}$ is independent of $H$, we can also expand the result~\ref{thm:tpf} for small $H$ and obtain up to first order in $\Theta$ and second order in $H$ the following result:
\begin{splitequation}
&\bra{0} \phi_\Theta(X) \phi_\Theta(Y) \ket{0} = \lim_{\epsilon \to 0^+} \Bigg[ \frac{\Gamma\left( \frac{n-2}{2} \right) H^{n-2}}{2 (2\pi)^\frac{n}{2}} \left[ 1 - Z(X,Y) + \mathi \epsilon \sgn(t-s) \right]^\frac{2-n}{2} \\
&\qquad+ \mathi \frac{(n-2) \Gamma\left( \frac{n}{2} \right)}{4 \pi^\frac{n}{2}} H \left[ (x-y)^2 + \mathi \epsilon \sgn(t-s) \right]^{-\frac{n}{2}} \\
&\qquad\qquad\times \biggl[ 2 \vec{x}_a \Theta^{a0} + 2 \Theta^{0b} \vec{y}_b - 2 H \vec{x}_a \Theta^{ab} \vec{y}_b \\
&\qquad\qquad\qquad\quad+ H (t+s) \left( \vec{x}_a \Theta^{a0} + \Theta^{0b} \vec{y}_b \right) \left( 1 - \frac{n (\vec{x}-\vec{y})^2}{(x-y)^2 + \mathi \epsilon \sgn(t-s)} \right) \biggr] \Bigg] \eqend{,}
\end{splitequation}
where $(x-y)^2 \equiv (\vec{x}-\vec{y})^2 - (t-s)^2$. Note that the non-commutativity vanishes for $n = 2$, already before expanding for small $H$, which is due to fact that in this case the operators $P_\mu$ annihilate the vacuum vector $\ket{0}$, analogously to what happens in Minkowski space. Furthermore, one sees very clearly that the non-commutative contributions also vanish in the limit $H \to 0$.
\end{remark}

At this point, we note an important difference to the Minkowski case: even though the the deformation of the quantum fields is defined by the same warped convolution as in Ref.~\onlinecite{BLS}, it does \emph{not} give an isometric representation of the deformed Rieffel product, i.e., in our case
\begin{equation}
\left( A \times_\Theta B \right)_\Theta \neq A_\Theta B_\Theta
\end{equation}
in general. This results from the fact that the infinitesimal generators of translations $P_\mu(X)$ are not global but depend on a point $X$, and do not commute at different points: $[ P_\mu(X), P_\nu(Y) ] \neq 0$ for $X \neq Y$ (see App.~\ref{app:pmu_comm}).

\section{Conclusion and outlook}
\label{sec:conclusion}

We have proposed a new way of deforming quantum field theories on curved spacetimes, by embedding them into a higher-dimensional Minkowski spacetime and constructing a Rieffel product and warped convolutions in the embedding spacetime. In our model, not only does the non-commutativity decrease exponentially as time progresses (such that constraints from Lorentz violation are fulfilled), but also the two-point function of the deformed scalar field differs from the undeformed case, with possible observable consequences. As an extension of the present work, it would be very interesting to determine potentially observable effects during the inflationary period, and to compare with other approaches to non-commutative geometry such as those of  Refs.~\onlinecite{brandenberger2002,huang2003,calcagni2004,oliveiraneto2021}.

Apart from the de Sitter spacetime, there exists a straightforward generalization of our deformation procedure to all $n$-dimensional maximally symmetric space(-times), with the embedding spacetime being Euclidean space (for the sphere), Minkowski spacetime (for hyperbolic space) or $\mathbb{R}^{2,n-1}$ (for Anti-de~Sitter spacetime). For more general curved spacetimes, the question of embedding is somewhat more complicated. It is known by the Nash embedding theorem that every Riemannian manifold (compact or not) can be isometrically embedded in $\mathbb{R}^m$ for some large enough $m$. In Ref.~\onlinecite{MS10} this was extended to certain classes of Lorentzian manifolds (stably causal spacetimes with steep temporal function), which in particular include globally hyperbolic spacetimes in which quantum field theories can be defined rigorously. In this way we can define a strict deformation of quantum field theories in any globally hyperbolic manifold, at least in principle.

While the chosen embedding for de~Sitter spacetime is standard, it is not unique, and we expect the noncommutative spacetime to change in general depending on the embedding. It is clear that nothing can change if one takes as embedding spacetime the direct product $\mathbb{R}^{1,n} \times \mathcal{M}$ with any space(-time) $\mathcal{M}$, extends functions to be constant on $\mathcal{M}$ and only deforms the first $n$ coordinates of the $\mathbb{R}^{1,n}$ factor as in our case. For a more general embedding, it is unclear what exactly might happen. Even if we keep the same embedding spacetime $\mathbb{R}^{1,n}$ but change the way functions are extended from de~Sitter to the full $\mathbb{R}^{1,n}$ or which coordinates are deformed, the results will change. However, since $X^\mu \to x^\mu$ as the Hubble parameter $H \to 0$ while $X^n$ diverges in this limit, any such change will destroy the commutative diagram of Fig.~\ref{pic:commuting}. That is, our choice is the only one where the limit $H \to 0$ and the deformation commute, apart from the previously mentioned phenomenological advantages. Another criterion that one could employ to reduce the ambiguity in the embedding and the deformation is dynamical locality~\cite{FVa,FVb}, which requires that descriptions of local physics based on kinematical and dynamical considerations should coincide. However, to use this criterion one would first need to verify whether the deformed theories that we study correspond to local quantum field theories.\footnote{We thank R.~Verch for this comment.} 

For simplicity, we had considered only a massless, conformally coupled scalar field, where the de~Sitter two-point function is given by a conformal rescaling of the Minkowski two-point function. However, already in this simplest example this relation only holds for the undeformed two-point function but not for the deformed one, since in flat space the deformed two-point function is equal to the undeformed one, whereas they differ in de~Sitter spacetime. For a scalar field with more general mass $m$ or curvature coupling $\xi$, it is well-known~\cite{AllenDeSitter} that for $m^2 + \xi R = m^2 + \xi n (n-1) H^2 > 0$ the undeformed two-point function only depends on the de~Sitter invariant $Z$, which for equal times only depends on the physical (co-moving) coordinates. Hence, at late times $t \to \infty$ we would expect results very similar to Corr.~\ref{corr:latetime_tpf}, namely that the corrections coming from the non-commutativity are exponentially suppressed with respect to the undeformed two-point function.

Lastly, it would be very interesting to connect our proposed deformation to the Weyl quantization for curved spacetimes defined in the recent Ref.~\onlinecite{DSL}. We first remark that while heuristically the product of two Weyl quantized functions $f$ and $g$ should be related to the Weyl quantization of the deformed product (or Rieffel product in the flat case) of the two functions, the relation cannot be totally straightforward since Weyl quantization concerns functions defined on phase space, i.e., depending on both coordinates and momenta, while the deformed product we presented only concerns functions defined on the manifold with no dependence on momenta. One possibility to make a comparison would be given by considering instead of just de~Sitter space its direct product with $\mathbb{R}^{1,n-1}$, which can be interpreted as momentum space. The embedding of the whole manifold into $\mathbb{R}^{2,2n-1}$ is then straightforward, and one could define a deformed product in total analogy to what we have presented before. However, the physical interpretation of such a construction is not clear, since as we have seen the ``momentum operators'' $P_\mu(X)$ that induce translations of the quantum field operators depend on position, and no global momentum exists. Since for quantum field theory in curved spacetimes the two-point function plays an essential role, a much more relevant method would be the comparison of the deformed two-point functions. For this, one would need to apply the Weyl quantization and the resulting star product of Ref.~\onlinecite{DSL} to free scalar quantum field theory, with the scalar field and its conjugate momentum in canonical quantization playing the role of position and momenta. The resulting two-point function could then be compared with the one of our non-commutative field $\phi_\Theta$, and one could verify whether there is a choice of $\Theta$ that makes the result agree (at least to first order) with the one that would result using the methods of Ref.~\onlinecite{DSL}.

\begin{acknowledgments}
This work has been funded by the Deutsche Forschungsgemeinschaft (DFG, German Research Foundation) --- project nos. 415803368 and 406116891 within the Research Training Group RTG 2522/1. We thank S. Franchino-Vi\~nas for comments and the anonymous Referee for various suggestions that helped improve the manuscript.
\end{acknowledgments}

\section*{Data availability}
In accordance with AIP Publishing policy, the authors declare that ``Data sharing is not applicable to this article as no new data were created or analyzed in this study.''

\appendix

\section{Oscillatory Integrals}
\label{oi}

The definitions and results of oscillatory integrals that we use are given in this section. First we define an oscillatory integral as in Ref.~\onlinecite[Ch.~(7.8)]{H}.
\begin{definition}
\label{oipf}
Let $X \subset \mathbb{R}^n$ be open and let $\Gamma$ be an open cone on $X \times \left( \mathbb{R}^N \setminus \{0\} \right)$ for some $N$, i.e., for all $(x,y) \in \Gamma$ also $(x,\lambda y) \in \Gamma$ for all $\lambda > 0$. We say that a function $\phi \in C^\infty(\Gamma)$ is a \textbf{phase function} in $\Gamma$ if
\begin{itemize}
\item $\phi(x, \lambda y) = \lambda \phi(x, y)$ for all $(x,y) \in \Gamma$ and $\lambda > 0$ \eqend{,}
\item $\Im \phi(x, y) \geq 0$ for all $(x,y) \in \Gamma$ \eqend{,}
\item $\nabla \phi(x, y) \neq 0$ for all $(x,y) \in \Gamma$ \eqend{.}
\end{itemize}
Then an integral of the form (see Ref.~\onlinecite[Eq.~7.8.1]{H})
\begin{equation}
\label{eq:iopf_integral}
\int \mathe^{\mathi \phi(x,y)} b(x,y) \total^N y
\end{equation}
is called an \textbf{oscillatory integral}.
\end{definition}
Another definition that we use is that of a symbol, see Ref.~\onlinecite[Def.~7.8.1]{H}.
\begin{definition}
Let $m$, $\rho$, $\delta$ be real numbers with $0 < \rho \leq 1$ and $0 \leq \delta < 1$. Then we denote by $S^{m}_{\rho,\delta}(X\times \mathbb{R}^N)$ the set of all $b\in C^\infty\left( X \times \mathbb{R}^N \right)$ such that for every compact set $K \subset X$ and all $\alpha$, $\beta$ there exists some constant $C_{\alpha,\beta,K}$ such that the estimate
\begin{equation}
\abs{ \partial^{\beta}_x \partial^{\alpha}_y b(x,y) } \leq C_{\alpha,\beta,K}(1+\abs{y})^{m - \rho \abs{\alpha} + \delta \abs{\beta}}
\end{equation}
is valid for all $(x,y) \in K \times \mathbb{R}^N$. The elements $S^{m}_{\rho,\delta}$ are called symbols of order $m$ and type $\rho,\delta$.
\end{definition}
It is proven in Ref.~\onlinecite[Thm.~7.8.2]{H} (see also Refs.~\onlinecite{LW,Jo}), that if $b \in S^{m}_{\rho,\delta}$ and $m < -N+1$ the oscillatory integral~\eqref{eq:iopf_integral} converges to a well-defined function. In the case $m \geq -N+1$, the oscillatory integral has to be defined in a distributional manner.

The Rieffel product~\eqref{eq:rfp}, given as an oscillatory integral with phase function $\phi(x,y) = x \cdot y$, thus either converges to a well-defined function or can still be defined in a distributional manner if the integrand is a symbol. Hence, we define the domain $\mathcal{D}$ to be the set of all functions $f$, $g$ such that
\begin{equation}
\tau_{\Theta x}(f) \tau_y(g) \in S^{m}_{\rho,\delta} \eqend{.}
\end{equation}

\section{Determination of the operators \texorpdfstring{$P_\mu$}{P\textmu}}
\label{sec:caop}

Here we explicitly determine the operators defined by equations~\eqref{eq:p0phi_comm} and~\eqref{eq:piphi_comm} in terms of the creation and annihilation operators. We make the ansatz
\begin{equation}
P_\mu(X) = \int \left[ p_\mu(\vec{q},X) a^\dagger(\vec{q}) a(\vec{q}) + r_\mu(\vec{q},X) a(\vec{q}) a(-\vec{q}) + r^*_\mu(\vec{q},X) a^\dagger(\vec{q}) a^\dagger(-\vec{q}) \right] \frac{\total^{n-1} q}{(2\pi)^{n-1}} \eqend{,}
\end{equation}
and compute
\begin{splitequation}
\left[ P_\mu(X), \phi(X) \right] &= \int \bigg[ \left[ p_\mu(\vec{p},X) f(\vec{p},t) - r_\mu(\vec{p},X) f^*(-\vec{p},t) - r_\mu(-\vec{p},X) f^*(-\vec{p},t) \right] a(\vec{p}) \\
&\qquad+ \left[ - p_\mu(-\vec{p},X) f^*(-\vec{p},t) + r^*_\mu(-\vec{p},X) f(\vec{p},t) + r^*_\mu(\vec{p},X) f(\vec{p},t) \right] a^\dagger(-\vec{p}) \bigg] \\
&\qquad\qquad\times \mathe^{\mathi \vec{p} \vec{x}} \frac{\total^{n-1} p}{(2\pi)^{n-1}} \eqend{.}
\end{splitequation}
Comparing with~\eqref{eq:p0phi_comm} and~\eqref{eq:piphi_comm}, we obtain
\begin{equations}[eq:app:caop]
p_0(\vec{p},X) &= - \frac{1}{H^2 X^n (X^0 + X^n)} \left[ \abs{\vec{p}} + H X^b \vec{p}_b \right] \eqend{,} \\
r_0(\vec{p},X) &= - \frac{n-2}{4} \frac{\mathi}{X^n} \exp\left[ 2 \mathi \frac{1 - H (X^0 + X^n)}{H (X^0 + X^n)} \frac{\abs{\vec{p}}}{H} \right] \eqend{,} \\
p_a(\vec{p},X) &= \frac{1}{H (X^0 + X^n)} \left[ \vec{p}_a + \frac{X_a}{H X^n (X^0 + X^n)} \left[ \abs{\vec{p}} + H X^b \vec{p}_b \right] \right] \eqend{,} \\
\begin{split}
r_a(\vec{p},X) &= \frac{n-2}{4} \frac{\mathi}{X^n (X^0 + X^n)} X_a \exp\left[ 2 \mathi \frac{1 - H (X^0 + X^n)}{H (X^0 + X^n)} \frac{\abs{\vec{p}}}{H} \right] \\
&= - \frac{X_a}{X^0 + X^n} r_0(\vec{p},X) = - H \vec{x}_a r_0(\vec{p},X) \eqend{,}
\end{split}
\end{equations}
and in the flat-space limit $H \to 0$ this has the correct limit
\begin{equation}
p_0(\vec{p},X) \to - \abs{\vec{p}} \eqend{,} \qquad r_0(\vec{p},X) \to 0 \eqend{,} \qquad p_a(\vec{p},X) \to \vec{p}_a \eqend{,} \qquad r_a(\vec{p},X) \to 0 \eqend{.}
\end{equation}

\section{Commutativity of the operators \texorpdfstring{$P_\mu$}{P\textmu}}
\label{app:pmu_comm}

We compute the commutator of two operators $P_\mu$ at possibly different points $X$ and $Y$.
\begin{splitequation}
\label{eq:app_pmu_comm_def}
\left[ P_\mu(X), P_\nu(Y) \right] &= \int \bigg[ r_{\mu\nu}(\vec{q},X,Y) a(\vec{q}) a(-\vec{q}) - r_{\mu\nu}^*(-\vec{q},X,Y) a^\dagger(\vec{q}) a^\dagger(-\vec{q}) \\
&\qquad+ p_{\mu\nu}(\vec{q},X,Y) a(-\vec{q}) a^\dagger(-\vec{q}) + p_{\mu\nu}(\vec{q},X,Y) a^\dagger(\vec{q}) a(\vec{q}) \Bigg] \frac{\total^{n-1} q}{(2\pi)^{n-1}} \eqend{,}
\end{splitequation}
with
\begin{equations}
p_{\mu\nu}(\vec{q},X,Y) &\equiv [ r_\mu(\vec{q},X) + r_\mu(-\vec{q},X) ] r^*_\nu(\vec{q},Y) - [ r^*_\mu(\vec{q},X) + r^*_\mu(-\vec{q},X) ] r_\nu(-\vec{q},Y) \eqend{,} \\
r_{\mu\nu}(\vec{q},X,Y) &\equiv [ r_\mu(\vec{q},X) + r_\mu(-\vec{q},X) ] p_\nu(-\vec{q},Y) - p_\mu(-\vec{q},X) [ r_\nu(\vec{q},Y) + r_\nu(-\vec{q},Y) ] \eqend{,}
\end{equations}
which does not obviously vanish for any choice of points $X$ and $Y$. However, we compute
\begin{equations}
p_{00}(\vec{p},X,Y) &= 4 \mathi \, \Im\left[ r_0(\vec{p},X) r^*_0(\vec{p},Y) \right] \eqend{,} \\
p_{0b}(\vec{p},X,Y) &= - H \vec{y}_b p_{00}(\vec{p},X,Y) \eqend{,} \\
p_{a0}(\vec{p},X,Y) &= - H \vec{x}_a p_{00}(\vec{p},X,Y) \eqend{,} \\
p_{ab}(\vec{p},X,Y) &= H^2 \vec{x}_a \vec{y}_b p_{00}(\vec{p},X,Y) \eqend{,}
\end{equations}
so the $p_{\mu\nu}$ vanish for $X = Y$, i.e., when the operators are taken at the same point $X$. We then obtain further
\begin{equations}
r_{00}(\vec{p},X,X) &= 0 \eqend{,} \\
r_{0b}(\vec{p},X,X) &= - \frac{2}{H (X^0 + X^n)} r_0(\vec{p},X) \vec{p}_b \eqend{,} \\
r_{a0}(\vec{p},X,X) &= \frac{2}{H (X^0 + X^n)} r_0(\vec{p},X) \vec{p}_a \eqend{,} \\
r_{ab}(\vec{p},X,X) &= \frac{2}{(X^0 + X^n)} r_0(\vec{p},X) \left( \vec{x}_a \vec{p}_b - \vec{x}_b \vec{p}_a \right) \eqend{.}
\end{equations}
Since $r_0(\vec{p},X)$ is an even function of $\vec{p}$, the $r_{\mu\nu}(\vec{p},X,X)$ are odd functions of $\vec{p}$, and because the annihilation operators commute among themselves (as well as the creation operators) the integrand in the commutator~\eqref{eq:app_pmu_comm_def} is an odd function such that the integral vanishes. That is, we have
\begin{equation}
\label{eq:app_pmu_comm}
\left[ P_\mu(X), P_\nu(X) \right] = 0 \eqend{.}
\end{equation}

\section{Proof of Theorem~\ref{thm:tpf}}
\label{proof:tpf}

\begin{proof}
To first order in the non-commutative parameter $\Theta$ we have~\eqref{eq:phitheta_firstorder}
\begin{splitequation}
\phi_\Theta(X) &= \phi(X) - \mathi \Theta^{\mu\nu} \left[ P_\mu(X), \phi(X) \right] P_\nu(X) + \bigo{\Theta^2} \\
&= \phi(X) - \mathi \Theta^{\mu\nu} \Big( \left[ P_\mu(X) P_\nu(X), \phi(X) \right] - P_\mu(X) \left[ P_\nu(X), \phi(X) \right] \Big) + \bigo{\Theta^2} \\
&= \phi(X) + \mathi \Theta^{\mu\nu} P_\mu(X) \left[ P_\nu(X), \phi(X) \right] + \bigo{\Theta^2} = {}_\Theta\phi(X) + \bigo{\Theta^2} \eqend{,}
\end{splitequation}
where we used that $\Theta^{\mu\nu} P_\mu(X) P_\nu(X) = \Theta^{\mu\nu} \left[ P_\mu(X), P_\nu(X) \right]/2 = 0$~\eqref{eq:app_pmu_comm}, which confirms the general equality $\phi_\Theta = {}_\Theta\phi$ to first order.

We thus can compute the two-point function up to first order in $\Theta$:
\begin{splitequation}
\label{eq:app:tpf_2pf}
&\bra{0} \phi_\Theta(X) \phi_\Theta(Y) \ket{0} = \bra{0} \phi(X) \phi(Y) \ket{0} + \Theta^{\mu\nu} \bra{0} \partial_{X^\mu} \phi(X) P_\nu(X) \phi(Y) \ket{0} \\
&\hspace{10em}+ \Theta^{\mu\nu} \bra{0} \phi(X) \partial_{Y^\mu} \phi(Y) P_\nu(Y) \ket{0} + \bigo{\Theta^2} \\
&\quad= \exp\left[ - \frac{n-2}{2} H (s+t) \right] \int \frac{1}{2 \abs{\vec{p}}} \exp\left[ \mathi \big[ \exp( - H t ) - \exp( - H s ) \big] \frac{\abs{\vec{p}}}{H} \right] \mathe^{\mathi \vec{p} (\vec{x} - \vec{y})} \frac{\total^{n-1} p}{(2\pi)^{n-1}} \\
&\qquad- \frac{n-2}{2} \Theta^{0b} \left[ \exp( - H t ) \frac{1}{X^n} + \exp( - H s ) \frac{1}{Y^n} \right] \exp\left[ - \frac{n-2}{2} H (s+t) \right] \\
&\qquad\qquad\times \int \frac{\vec{p}_b}{2 \abs{\vec{p}}} \exp\left[ \mathi \big[ \exp( - H t ) - \exp( - H s ) \big] \frac{\abs{\vec{p}}}{H} \right] \mathe^{\mathi \vec{p} (\vec{x} - \vec{y})} \frac{\total^{n-1} p}{(2\pi)^{n-1}} \\
&\qquad+ \frac{n-2}{2} H \Theta^{ab} \left[ \exp( - H t ) \frac{\vec{x}_a}{X^n} + \exp( - H s ) \frac{\vec{y}_a}{Y^n} \right] \exp\left[ - \frac{n-2}{2} H (s+t) \right] \\
&\qquad\qquad\times \int \frac{\vec{p}_b}{2 \abs{\vec{p}}} \exp\left[ \mathi \big[ \exp( - H t ) - \exp( - H s ) \big] \frac{\abs{\vec{p}}}{H} \right] \mathe^{\mathi \vec{p} (\vec{x} - \vec{y})} \frac{\total^{n-1} p}{(2\pi)^{n-1}} + \bigo{\Theta^2} \eqend{.}
\end{splitequation}

The $\vec{p}$ integral is easily done in spherical coordinates using that
\begin{equation}
\int f(\abs{\vec{p}}) \mathe^{\mathi \vec{p} \vec{x}} \frac{\total^{n-1} p}{(2\pi)^{n-1}} = \frac{1}{(2 \pi)^\frac{n-1}{2}} \int_0^\infty f(p) p^{n-2} \left( p \abs{\vec{x}} \right)^\frac{3-n}{2} \bessel{J}{\frac{n-3}{2}}{p \abs{\vec{x}}} \total p \eqend{,}
\end{equation}
where J is the Bessel function, and gives for the undeformed two-point function
\begin{splitequation}
\bra{0} \phi(X) \phi(Y) \ket{0} &= \frac{1}{2 (2 \pi)^\frac{n-1}{2} r^{n-2}} \exp\left[ - \frac{n-2}{2} H (s+t) \right] \\
&\qquad\times \lim_{\epsilon \to 0^+} \int_0^\infty \exp\left[ \mathi \frac{\exp( - H t ) - \exp( - H s )}{r H} p - \epsilon p \right] p^\frac{n-3}{2} \bessel{J}{\frac{n-3}{2}}{p} \total p \\
&= \frac{\Gamma\left( \frac{n-2}{2} \right)}{4 \pi^\frac{n}{2}} \lim_{\epsilon \to 0^+} \left[ r^2 \exp\left[ H (s+t) \right] - \frac{2}{H^2} \Big( \cosh\left[ H (s-t) \right] - 1 \Big) + \mathi \epsilon \sgn(t-s) \right]^\frac{2-n}{2} \\
&= \frac{\Gamma\left( \frac{n-2}{2} \right) H^{n-2}}{2 (2\pi)^\frac{n}{2}} \lim_{\epsilon \to 0^+} \left[ 1 - Z(X,Y) + \mathi \epsilon \sgn(t-s) \right]^\frac{2-n}{2} \eqend{.}
\end{splitequation}
with $r \equiv \abs{\vec{x}-\vec{y}}$. This is of course the well-known result for the two-point function~\cite{AllenDeSitter}, which here merely serves as a check on the computation. Performing the remaining spatial derivatives in equation~\eqref{eq:app:tpf_2pf} coming from the $\vec{p}_b$ in the spatial integrals, we obtain after some rearrangements the result~\eqref{eq:tpf}.
\end{proof}

\section*{References}

\bibliography{allliterature1}

\end{document}